\documentclass[11pt,a4paper]{iopart}
\usepackage[T1]{fontenc}
\usepackage{lmodern}

\usepackage{iopams}  
\usepackage{graphicx}

\usepackage[breaklinks=true,colorlinks=true,linkcolor=magenta,urlcolor=blue,citecolor=blue]{hyperref}

\expandafter\let\csname equation*\endcsname\relax
\expandafter\let\csname endequation*\endcsname\relax
\usepackage{amsmath}
\usepackage{mathtools}
\mathtoolsset{showonlyrefs}

\usepackage{amsthm}
\newtheorem{theorem}{Theorem}
\newtheorem{theoremA}{Theorem}[section]
\newtheorem{corollary}{Corollary}

\usepackage{xcolor}
\definecolor{editcol}{HTML}{E34A33}

\newcommand{\edit}[1]{#1}

\setlength{\parindent}{0pt}
\setlength{\parskip}{.6em}

\newcommand{\prob}{\mathsf{P}}

\newcommand{\Ell}{\mathbb{L}}
\newcommand{\JJ}{\mathbb{J}}
\newcommand{\Aop}{\mathbb{A}}

\newcommand{\dd}{\text{d}}

\newcommand{\deriv}[2]{\frac{\text{d} #1}{\text{d} #2}}

\newcommand{\eps}{\varepsilon}

\newcommand{\TT}{\mathbb{T}}

\newcommand{\subfig}[1]{\textbf{\sffamily(#1)}}

\usepackage[square,sort&compress,numbers]{natbib}
\bibliographystyle{jphysicsB}

\begin{document}

\title[Jump Locations of Jump-Diffusion]{Jump Locations of Jump-Diffusion Processes with State-Dependent Rates}




\author[cor1]{Christopher E. Miles$^{1}$}
\ead{\mailto{miles@math.utah.edu}}

\author{James P. Keener$^{1,2}$}
\qquad

\address{$^1$Department of Mathematics, University of Utah}
\address{$^2$Department of Bioengineering, University of Utah}

\begin{abstract}
We propose a general framework for studying statistics of jump-diffusion systems driven by both Brownian noise (diffusion) and a jump process with state-dependent intensity. Of particular natural interest in many physical systems are the jump locations: the system evaluated at the jump times. As an example, this could be the voltage at which a neuron fires, or the so-called ``threshold voltage.''  However, the state-dependence of the jump rate provides direct coupling between the diffusion and jump components, making it difficult to disentangle the two to study individually. In this work, we provide an iterative map formulation of the sequence of distributions of jump locations. The distributions computed by this map can be used to elucidate other interesting quantities about the process, including statistics of the interjump times. Ultimately, the limit of the map reveals that knowledge of the stationary distribution of the full process is sufficient to recover (but not necessarily equal to) the distribution of jump locations. We propose two biophysical examples to illustrate the use of this framework to provide insight about a system. We find that a sharp threshold voltage emerges robustly in a simple stochastic integrate-and-fire neuronal model. The interplay between the two sources of noise is also investigated in a stepping model of molecular motor in intracellular transport pulling a diffusive cargo.
\end{abstract}

\maketitle

\section{Introduction}

 Stochastic models driven by both Brownian noise (diffusion) and a discrete jump component, or so-called \textit{jump-diffusion} models, have been used to describe a wide variety of phenomena. Perhaps most prominently, jump-diffusion has seen widespread use in mathematical finance \cite{Yan2006a,Kou2002,Kou2007,Figueroa-Lopez2012,Bjork1997} to describe frequent small transactions with occasional larger movements. Jump-diffusion has also served useful in mathematical biology, describing the integrate-and-fire nature of neuronal dynamics \cite{Gerstner2002,Sacerdote2013,DiMaio2004}. Other applications include biophysical descriptions of movement of chromosomes \cite{Shtylla2010}, interaction between soil moisture and rainfall events \cite{Daly2006b}, and the occurrence of radio pulsar glitches \cite{Fulgenzi2017}. Also relevant to the results described in this paper is a subset of jump-diffusion models that neglect diffusive noise and are driven by deterministic dynamics between jumps. This type of model appears in an equally eclectic variety of applications \cite{Wu2008,Daly2007a,DOdorico2006a,Bartlett2015a}.

Of particular interest in this work are jump-diffusion processes with state-dependent jump rates. This subset is seen in all the aforementioned applications, including finance \cite{Bjork1997,Glasserman2004}, ecology \cite{DOdorico2006a}, biology \cite{Daly2006c}, and astronomy \cite{Fulgenzi2017}. This is often a natural supposition to make of a model, as the rate at which the jump event occurs may not remain constant on the timescale of interest, but instead depend on the state of the system.  For instance, a neuron firing is dependent on the current voltage or a financial asset is often more likely to crash as it rises in price. 

For many of these applications, the inclusion of a jump component means that it is a significant feature to the modeler. Thus, it is natural to ask: when (in some sense) do these jumps occur? While this often leads to studying the interjump \textit{times}, a different possible interpretation (and the focus of this work) is to study the \textit{jump locations}: the system evaluated at the jump times. This could be the voltage at which the neuron fires (the ``threshold voltage'') or the price at which a stock crashes. Because the jump intensity is state-dependent, and the state itself is governed by a stochastic process, these locations are inherently random and difficult to disentangle from the full process. This raises another question: in what sense is knowledge (e.g. statistics) of the full process equivalent to knowledge of the jump locations? Does knowledge of the jump locations provide other insight to the behavior of the system? In this work, we seek to investigate these issues.

Previous studies have thoroughly investigated jump-diffusion models  for constant jump rates \cite{Daly2006b,Hanson2007}, as the two sources of noise can be independently superimposed. The interjump time distributions for processes with no diffusive noise have also been studied \cite{Daly2006c,Daly2007a} primarily using renewal theory or exploiting the deterministic nature between jumps. However, in general,  state-dependent jump-diffusion is  \textit{not} deterministic between jumps, nor a renewal process, as sequential distributions are not independent, hence previous approaches do not apply. The theory of Cox processes \cite{Cox1955} is also well-established and refers to  a diffusion process driving a state-dependent Poisson process, but this does not complete the feedback loop of the jump process further modifying the state, which our framework allows. Diffusion with switching, which bears a resemblance to jump-diffusion, has also been studied \cite{Lawley2015} but is distinctly different in the role of discrete noise and not addressed in this work. Although efficient Monte Carlo methods exist for jump-diffusion and L\'{e}vy processes \cite{Xia2012,Glasserman2004}, the jump component may be rare and therefore expensive to find an accurate distribution empirically. The solution to the full PDE describing this process can also be computed, but it is seemingly unknown how to disentangle the jump component alone from this description.

In this work, we present a general (accommodating a wide variety of models) framework for studying jump-diffusion systems and  focus particularly on studying the sequence of jump locations. We present an iterative map formulation explicitly describing the distribution of the $i$th jump location.  Fron this sequence of jump locations, statistics of the interjump times can be extracted even with diffusion, overcoming a limitation of previous works \cite{Daly2007a}. By taking the limit of the map, assuming the process reaches stationery, we find that the density of the full process \textit{differs} from that of the jump locations if and only if the jump rate is state-dependent. An explicit relationship between these two distributions is established, meaning that knowledge of one immediately provides knowledge of the other. 

A few examples are discussed, illustrating when this framework can be used to elucidate interesting behavior of a system. The first example, a stochastic model of neuronal integrate-and-fire, is used to demonstrate that the jump locations themselves (in this case, the firing voltages) may be interesting. For the second example, we propose a simple model of a molecular motor (such as kinesin \cite{Howard2001}) taking force-dependent steps pulling a diffusive cargo. Because the stepping rate is state dependent, but steps (and cargo diffusion) also modify the state, the effective stepping rate (a proxy for the motor's ability to produce transport) is non-trivial and can be studied using the proposed framework.

\section{Formulation}

\subsection{Jump-diffusion}
Let $X_t$ denote a state-dependent jump-diffusion process and $\tilde{p}(x,t)$ be the probability density (PDF) of the full process, \edit{defined formally as}
\begin{equation}
	\edit{\tilde{p}(x,t) \, \dd x \coloneqq \prob\left[X_t \in (x, x+ \dd x)\right]}. \label{eq:tildep}
\end{equation}
 The evolution of the density is described by the forward Champman-Kolmorgov equation \cite{Gardiner2009,Bressloff2014}
\begin{equation}
	\partial_t \tilde{p}(x,t) = \Ell \tilde{p} - \lambda(x)\tilde{p} + \JJ  \lambda \tilde{p}, \label{eq:reinject}
\end{equation}

where the operator $\Ell$ characterizes the diffusion component, $\lambda(x)$ is the state-dependent intensity at which the jump process occurs, and the operator $\JJ$ describes the jump. A feature we seek to emphasize is the state dependence of the jump-rate $\lambda(x)$, resulting in direct coupling between the diffusion and jump components of the process. 

\begin{figure}[h]\centering
\includegraphics[width=.66\textwidth]{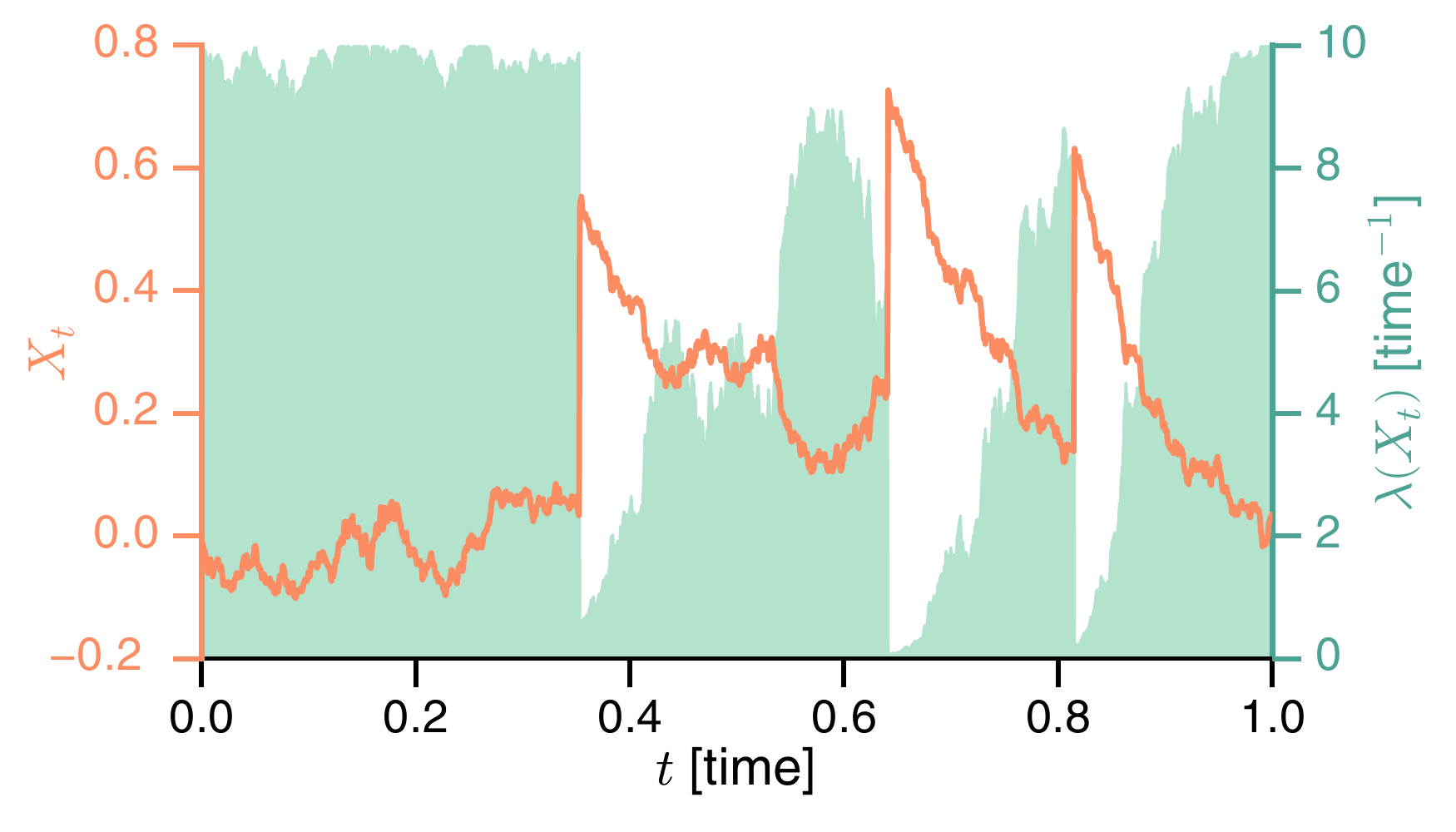}\centering
  \protect\centering\caption{An example realization  of a jump-diffusion process $X_t$ with $\Ell p = \partial_x\left\{axp \right\} + Dp_{xx}$, $\lambda(x) = \alpha \exp\left\{-x^2/\beta\right\}$ and $\JJ p = p(x-\Delta)$. Also shown is the state dependent jump rate $\lambda(x)$. In this particular realization, 3 jumps occur. Parameter values used $a=.1, D=1, \alpha=10, \beta=10, \Delta=1$.}
                \label{fig:ex}
\end{figure}
The particular choices of $\Ell, \JJ, \lambda$ significantly change the behavior and characteristics of the process, but the description \eqref{eq:reinject} is flexible enough to accommodate a wide variety of models, upon which we now elaborate.

\subsection{Diffusion component}
Consider the classical one-dimensional It\^{o} SDE driven by Brownian motion described by
\begin{equation}
	\dd Y_t = A(Y_t)\, \dd t + \edit{\sqrt{2D(Y_t)}} \, \dd W_t, \label{eq:langevin}
\end{equation}
where $W_t$ is a Brownian motion. The Fokker-Planck (forward Chapman-Kolmogorov) equation describing the evolution of the probability density function $p(y,t)$ corresponding to the path-wise description \eqref{eq:langevin} is 
\begin{equation}
	\partial_t p(y,t) = \Ell p \coloneqq -\partial_y\left\{A(y) p\right\} + \edit{\partial_{yy} \left \{ D(y) p\right\}} .
\end{equation}
We use the \textit{Fokker-Planck operator} $\Ell$ to characterize the diffusion process for the remainder of the paper. Although the It\^{o} interpretation is used here, we could also consider a Stratonovich interpretation by modifying the particular details of $\Ell$, as this description still produces a differential operator in the forward equation. 

\subsection{Jump component}
Let the jump process be an inhomogeneous (state-dependent) Poisson process with intensity (rate) $\lambda(X_t)$. Also necessary is a description of what occurs at the jumps, sometimes referred to as the \textit{reset map} \cite{DeVille2014}.  The behavior of the reset map is characterized by the \textit{jump operator} $\JJ$, which is a probability density flux, ensuring that \eqref{eq:reinject} indeed describes the evolution of a probability density. We now describe a few possible choices of this operator, letting $\tau$ denote the jump time.
\subsubsection{Jump operator examples}
\begin{enumerate}
\item \textit{Constant jump size}. At the jump times, the process increments by a fixed amount $\Delta$, so
\begin{equation}
	X_{\tau_{+}} = X_{\tau_-} + \Delta.
\end{equation}
Then, the corresponding jump operator $\JJ$ is a shift by that fixed quantity
\begin{equation}
	\JJ p\edit{(x,t)} \coloneqq p(x-\Delta,t).	
\end{equation}
\item \textit{Reset}. Fundamentally different than the previous example, rather than jumping by a fixed displacement, the process resets to a specific position, $\eta$, so that
\begin{equation}
	X_{\tau_+} = \eta.
\end{equation}
The corresponding $\JJ$  operator is 
\begin{equation}
	\JJ p\edit{(x,t)} \coloneqq \delta(x-\eta) \int_{-\infty}^{\infty} p(x,t) \, \dd x,
\end{equation}
where $\delta(x)$ is the Dirac delta function. The integral scaling is necessary to preserve \edit{probability fluxes: $\int_{-\infty}^{\infty} \JJ p \, \dd x = \int_{-\infty}^{\infty} p \, \dd x$.  }
\item \textit{Random jump size}. A generalization of the first example, the particle can jump a random displacement $\Delta$, where the size of the jump is described by the probability density $\Delta \sim \mu(\Delta,x)$, which can also be state-dependent. The $\JJ$ operator is
\begin{equation}
	\JJ p\edit{(x,t)} \coloneqq \int_{-\infty}^{\infty} p(x-\Delta,t) \mu(\Delta,x) \, \dd \Delta.
\end{equation}
\item \textit{Other examples}. Although not directly considered in this paper, maps of the form 
\begin{equation}
	X_{\tau_+} = \gamma X_{\tau_-} \qquad \implies \qquad \mathbb{J}p\edit{(x,t)} \coloneqq \gamma p(\gamma x,t)
\end{equation}
have been studied elsewhere \cite{DeVille2014} and can be formulated in our framework in the described manner. A generalization of this case could also be made to allow for jumps to a random location, but this does not seem to provide additional interesting structure.
\end{enumerate}

\edit{
While diffusion with resetting fits within the proposed framework, this topic has a long, rich history of study \cite{Evans2011,Montero2017}. Included  within this body of literature are investigations of: time-dependent resetting ($t$-dependent $\lambda$) \cite{Pal2016}, optimal resetting \cite{Evans2011b}, resetting in bounded domains \cite{Christou2015b}, resets with drift \cite{Montero2013}, characterizations of the stationary behavior \cite{Mendez2016}, and path-integral formulations \cite{Roldan2017}. We do not seek to propose our framework as an alternative to this rich body of literature, but merely identify that stochastic resetting fits within a family of models that otherwise lack this same level of attention and can be studied with the tools presented here. }

\section{Results}

In this section, we present the general theoretical results of the paper regarding the distributions of jump locations and interjump times.

\subsection{Jump distributions}
\edit{We begin by providing intuition for the construction of the sequence of jump locations. Rather than studying the full process including diffusion and jumps, a more convenient process} to study is a survival formulation between jumps, which we denote the \textit{absorbing process},
\begin{equation} \label{eq:absorb}
	\begin{cases}
	\partial_t p(x,t) = \Ell p - \lambda p\\
	\partial_t q(x,t) = \lambda p.
	\end{cases}
\end{equation}
Note that we distinguish between this and the full process $\tilde{p}(x,t)$, which we refer to henceforth as the \textit{reinjected process}. While the absorbing process does not capture all the behavior of the full re-injected process (as it does not contain any information about $\JJ$), the absorbing process fixes the distribution (in both space and time) of jump events and consequently serves more fruitful in disentangling the jump component from the full process.

\begin{theorem}\label{thm:1}
The densities of the jump location $p_{\ell}(x)$ and jump time $p_{\tau}(t)$ of \eqref{eq:absorb} are described by
\begin{equation}
	p_{\ell}(x) = q(x,\infty) = \int_0^{\infty} \lambda(x) p(x,t) \, \dd t \end{equation}
	and 
	\begin{equation}
	 p_{\tau}(t)  = \int_{-\infty}^{\infty} \lambda(x) p(x,t) \, \dd x.
\end{equation}
\end{theorem}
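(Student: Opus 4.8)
The plan is to first pin down what the function $p(x,t)$ in \eqref{eq:absorb} means probabilistically, after which both statements fall out by short manipulations of the system. I would realize the diffusion on its own as the solution $Y_t$ of \eqref{eq:langevin} started from the law $p(\cdot,0)$, draw an independent unit-rate exponential variable $E$, and set $\tau \coloneqq \inf\{t \ge 0 : \int_0^t \lambda(Y_s)\,\dd s \ge E\}$, so that $\tau$ is the first ring of a doubly stochastic (Cox-process) clock with $\prob[\tau > t \mid (Y_s)_{s\le t}] = \exp\{-\int_0^t \lambda(Y_s)\,\dd s\}$. The Feynman--Kac formula then identifies the sub-probability density $p(x,t)\,\dd x = \prob[Y_t \in (x,x+\dd x),\ \tau > t] = \mathbb{E}\big[\mathbf{1}_{\{Y_t \in (x,x+\dd x)\}}\exp\{-\int_0^t \lambda(Y_s)\,\dd s\}\big]$ as the unique solution of $\partial_t p = \Ell p - \lambda p$ with that initial datum. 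This identification --- that $p$ is the density of the process that has not yet jumped --- is the conceptual core; everything past it is bookkeeping, and one could equally take it as the definition and treat the theorem as a consistency check.

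For the jump-time density I would differentiate the survival function $S(t) \coloneqq \prob[\tau > t] = \int_{-\infty}^{\infty} p(x,t)\,\dd x$. Passing $\partial_t$ under the integral and substituting the first line of \eqref{eq:absorb} gives $p_\tau(t) = -S'(t) = -\int_{-\infty}^{\infty} \Ell p\,\dd x + \int_{-\infty}^{\infty} \lambda(x) p(x,t)\,\dd x$. Since $\Ell$ is in divergence (Fokker--Planck) form, $\int_{-\infty}^{\infty} \Ell p\,\dd x$ collapses to a difference of probability-flux terms at the domain boundary, which vanishes under the decay or boundary conditions carried by $\Ell$; this leaves $p_\tau(t) = \int_{-\infty}^{\infty} \lambda(x) p(x,t)\,\dd x$.

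For the jump-location density I would condition on the diffusion path once more: the Cox first-arrival density at time $t$ is $\lambda(Y_t)\exp\{-\int_0^t \lambda(Y_s)\,\dd s\}$, so for any Borel set $A$, $\prob[X_{\tau_-} \in A] = \int_0^{\infty} \mathbb{E}\big[\mathbf{1}_{\{Y_t \in A\}}\,\lambda(Y_t)\exp\{-\int_0^t \lambda(Y_s)\,\dd s\}\big]\,\dd t$, and Fubini together with the Feynman--Kac representation above rewrites the inner expectation as $\int_A \lambda(x) p(x,t)\,\dd x$. Hence $p_\ell(x) = \int_0^{\infty} \lambda(x) p(x,t)\,\dd t$. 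Finally, integrating the second line of \eqref{eq:absorb} in time with the natural initial condition $q(x,0) \equiv 0$ yields $q(x,t) = \int_0^t \lambda(x) p(x,s)\,\dd s$, so $q(x,\infty)$ equals the same expression --- $q$ accumulates precisely the jump-location mass --- which completes the chain $p_\ell(x) = q(x,\infty) = \int_0^{\infty} \lambda(x) p(x,t)\,\dd t$.

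The obstacle here is not any one computation but the hypotheses that legitimize the three interchanges above: differentiating $S$ under the integral, applying Fubini inside the path expectation, and --- most importantly --- discarding $\int_{-\infty}^{\infty} \Ell p\,\dd x$ as a vanishing boundary flux. I would record these as standing assumptions (sufficient spatial decay of $p$ and its probability flux, or reflecting/natural boundaries, together with $\int_{-\infty}^{\infty} \lambda(x) p(x,t)\,\dd x < \infty$ for each $t$), under which $\int_0^{\infty} p_\tau(t)\,\dd t = S(0) - S(\infty) = 1 - \prob[\tau = \infty]$; imposing in addition that a jump occurs almost surely ($S(\infty) = 0$) makes $p_\ell$ and $p_\tau$ honest probability densities.
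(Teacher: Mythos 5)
Your proof is correct, and it is worth noting where it coincides with and where it departs from the paper's argument. For the jump-time density you do exactly what the paper does: differentiate the survival probability $s(t)=\int_{-\infty}^{\infty} p(x,t)\,\dd x$, substitute the first line of the absorbing system, and discard the spatial flux coming from $\Ell$ (the paper phrases this as $p\to 0$ as $x\to\pm\infty$; your flux-form remark is the same point). For the jump-location density the paper stays entirely at the PDE level: it observes that $q$ has no spatial flux, so $p_\ell(x)=\int_0^\infty \partial_t q\,\dd t=q(x,\infty)$ with $q(x,0)\equiv 0$ --- the identification of $p$ as the density of the not-yet-jumped process and of $q$ as the accumulated jump mass is treated as essentially definitional. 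You instead supply that identification probabilistically: realizing $\tau$ as the first arrival of a Cox clock via the unit exponential construction, invoking Feynman--Kac to identify $p(x,t)$ as the killed sub-probability density, and computing $\prob[X_{\tau_-}\in A]$ from the first-arrival density $\lambda(Y_t)\exp\{-\int_0^t\lambda(Y_s)\,\dd s\}$ together with Fubini, before checking consistency with $q(x,\infty)$ in the same way the paper does. This buys a rigorous answer to a question the paper leaves implicit --- why the absorbing system's $p$ and $q$ mean what the theorem claims --- at the cost of extra machinery and the standing decay/integrability hypotheses you list, which play roughly the role of the paper's brief no-explosion caveat; your observation that $s(\infty)=0$ is needed for $p_\ell$ and $p_\tau$ to be honest (rather than defective) densities is a genuine point the paper glosses over.
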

\begin{proof}
Define the survival probability $s(t)$ to be the probability that the jump \textit{has not} occurred by time $\tau$
\begin{equation}
	s(t) \coloneqq \mathbb{P}[\tau > t] = \int_{-\infty}^{\infty}p(x,t) \, \dd x,
\end{equation}
which means the first jump time density $p_\tau(t)$ is then 
\begin{equation}
	p_\tau(t) = -\deriv{s}{t} = -\int_{-\infty}^{\infty} \partial_t p \, \dd x.
\end{equation}
Then, using \eqref{eq:absorb} and the fact that $p \to 0 $ as $x \to \pm \infty$, we obtain
\begin{equation}
	p_\tau(t) = \int_{\infty}^{\infty} \lambda p \, \dd x = \int_{\infty}^{\infty} \partial_t q \, \dd x.
\end{equation}
Due to the lack of spatial flux in $q(x,t)$, the density of  exit locations is  $q(x,\infty)$, which \edit{can be} obtained by integrating over all possible jump times
\begin{equation}
	p_\ell(x) = \int_0^\infty \partial_t q \, \dd t = q(x,\infty) - q(x,0) = q(x,\infty)
\end{equation}
by noting that $q(x,0) \equiv 0$. \edit{For this proof (and the remainder of the results), we assume that $\lambda, \Ell, \JJ$ are chosen with sufficient regularity such that there are no explosions (i.e. the intensity stays finite. For more details on these conditions, see \cite{Bally2017}.}
\end{proof}
Although this provides useful information about the distributions (in time and space) at which a particular jump occurs, we are interested in studying the distributions of all jumps. This requires including the information embedded in $\mathbb{J}$ about how to re-inject the particle. triggered. \edit{A similar idea of studying the system between jumps was proposed in \cite{Montero2016}, in which the authors study the record statistics of the so-called Sisyphus random walk. However, the Sisyphus random walk is a discrete random walk on a lattice, rather than a continuous diffusion as discussed here, meaning the results are not directly applicable but strongly paralleled.}
\subsection{Jump location sequential mapping}

We ystudy the interjump dynamics by noting that between jumps, the full process \eqref{eq:reinject} can be described by the absorbing process \eqref{eq:absorb}. Define $t_i$ to be the $i$th jump time and let $X_i \coloneqq X_{t_i}$ be the $i$th jump location. \edit{Let $p_i(x)$ be the probability density of $X_{t_i}$ formally defined to be, 
\begin{equation} 
	p_i(x) \dd x \coloneqq \prob[X_{t}\in(x, x+\dd x) \, |\, t=t_i].
\end{equation}
\begin{theorem}
The distribution of the $i$th jump location, $p_i(x)$  is described by the following iterative relation
\end{theorem}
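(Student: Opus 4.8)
The plan is to obtain the recursion by regenerating the process at each jump time and applying Theorem~\ref{thm:1} to each interjump segment. The structural fact that makes this work is that, because the jumps come from a state-dependent Poisson clock, the evolution \emph{between} two consecutive jumps is governed exactly by the absorbing dynamics \eqref{eq:absorb}: conditioned on the location immediately after the $(i-1)$th jump, the process up to the $i$th jump is the killed diffusion started from that location, independent of the prior history. This is the strong Markov property at the stopping time $t_{i-1}$ (a.s.\ finite by the no-explosion hypothesis invoked for Theorem~\ref{thm:1}), and it is what lets us chain single-jump statements into a map. Equivalently, and without invoking regeneration explicitly, one may decompose the re-injected density of \eqref{eq:reinject} as $\tilde p=\sum_{k\ge 0}\tilde p_k$, where $\tilde p_k$ is the sub-density of paths that have made exactly $k$ jumps; each $\tilde p_k$ solves \eqref{eq:absorb}-type dynamics with the re-injection of $\tilde p_{k-1}$ as source, and the stated recursion is the accumulated exit flux of successive layers.

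First I would express the law of the post-jump state in terms of $p_{i-1}$. By Theorem~\ref{thm:1} applied to the $(i-1)$th segment, $p_{i-1}$ is the accumulated jump flux $\int_0^\infty \lambda(x)\,p(x,t)\,\dd t$ of that segment, i.e.\ the law of the location $X_{t_{i-1}^-}$ from which the jump is triggered. The operator $\JJ$ is, by construction, the linear map carrying a jump-flux density to the corresponding re-injection density — this is precisely the role it plays in the term $\JJ\lambda\tilde p$ of \eqref{eq:reinject} — so by linearity in $t$ the density of the post-jump state $X_{t_{i-1}^+}$ is $\JJ p_{i-1}$. I would also note that $\JJ p_{i-1}$ is again a probability density: $\JJ$ preserves total mass, and under the no-explosion hypothesis the absorbing process transfers all of its mass from $p$ into $q$, so each $p_i$ integrates to one and the map preserves the set of densities.

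I would then apply Theorem~\ref{thm:1} to the $i$th segment, i.e.\ to \eqref{eq:absorb} with initial datum $\JJ p_{i-1}$, obtaining
\begin{equation}
	p_i(x) = \int_0^\infty \lambda(x)\, u_i(x,t)\,\dd t,
\end{equation}
where $u_i$ solves $\partial_t u_i = \Ell u_i - \lambda u_i$ with $u_i(\cdot,0)=\JJ p_{i-1}$; the base case $i=1$ is Theorem~\ref{thm:1} with the prescribed initial law $p_0$ in place of $\JJ p_{i-1}$, and induction on $i$ closes the argument. If the statement is meant to be displayed in closed operator form, I would integrate the killed semigroup: since $\int_0^\infty e^{t(\Ell-\lambda)}\,\dd t=(\lambda-\Ell)^{-1}$ on the relevant space, the recursion collapses to $p_i=\lambda\,(\lambda-\Ell)^{-1}\JJ\,p_{i-1}$.

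The main obstacle I expect is the rigorous bookkeeping around the regeneration step and the conditional density $p_i$ itself. The conditioning event $\{t=t_i\}$ is null, so $p_i$ must be read as the law of $X_{t_i}$ (a regular conditional distribution), and one must argue that the strong Markov property at $t_{i-1}$ genuinely decouples the $i$th interjump window from the past, and that the pre-/post-jump accounting is consistent — the reset acts from $X_{t_i^-}$, matching the placement of $\JJ$ in \eqref{eq:reinject} and the fact that $q(x,\infty)$ in Theorem~\ref{thm:1} records the pre-jump location. The remaining analytic points — existence of $(\lambda-\Ell)^{-1}$ (equivalently, a.s.\ death of the killed process), interchanging the time integral with $\Ell$ and with $\JJ$, and decay of $p$ at spatial infinity — are routine under the regularity and no-explosion assumptions already in force for Theorem~\ref{thm:1}.
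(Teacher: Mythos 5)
Your argument is correct and is essentially the paper's own: the paper justifies the recursion in one line by noting that between jumps the full process is the absorbing process \eqref{eq:absorb}, so Theorem~\ref{thm:1} applied segment by segment (with $\JJ$ carrying the accumulated exit flux $p_{i-1}$ into the initial condition of the next segment) gives the map, and your regeneration/strong-Markov and layer-decomposition remarks simply make that one-line appeal precise. Your closed operator form $p_i=\lambda(\lambda-\Ell)^{-1}\JJ p_{i-1}$ is exactly the content of the paper's subsequent theorem $u_{i+1}=\TT^{-1}\JJ\lambda u_i$, and your indexing ($\hat p_i$ started from $\JJ p_{i-1}$ yielding $p_i$) is the consistent reading of the recursion as displayed in \eqref{eq:pi}.
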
\vspace{-.5in}
\begin{equation} \label{eq:pi}
\begin{cases}
\partial_t \hat{p}_i(x,t) = \Ell \hat{p}_i - \lambda \hat{p}_i \\
\hat{p}_i(x,0) = \begin{cases} \JJ p_{i-1}  & i > 1 \\ p_0 & i = 1\end{cases}\\
p_{i+1}(x) = \int_{0}^{\infty} \lambda \hat{p}_i \, \dd t,
\end{cases}
\end{equation}
where $p_0$ be some known starting distribution of the process. }

This construction follows from directly from \textbf{Theorem \ref{thm:1}}. Effectively, $\hat{p}_i$ serves as intermediate quantity tracking the distribution of all possible jump locations (and times) for that particular iterate. \edit{That is, $\hat{p}_i$ is the survival density (in both time and space), tracking the process between jumps, so 
\begin{equation}
 	\hat{p}_i(x,t) \dd x \coloneqq \prob[X_{t+t_{i-1}} \in (x, x+\dd x)\, , \, t+t_{i-1} < t_{i}].
 \end{equation}} Then, to start the next iterate, the distribution of jump locations must be modified by the jump procedure characterized by $\JJ$. Although tracking the jump locations rather than the jump times seems counterintuitive at first, no natural analogous formulation is apparent. 

Define the more convenient quantity to study
\begin{equation}u_i \coloneqq p_i(x)/\lambda(x). \label{eq:u_idef}\end{equation}
 If $\lambda(x) =0$ for some  $x$, then necessarily $p_i(x) = 0$ since $\lambda(x) =0 $ implies no jump can occur at this location, hence this quotient causes no difficulties. 
\begin{theorem}
The description \eqref{eq:pi} is equivalent to
\begin{equation}
	\TT u_{i+1} \coloneqq \left [\lambda(x) - \Ell \right] u_{i+1} = \JJ \lambda u_i,
\end{equation}
or more explicitly 
\begin{equation}
	u_{i+1} = \TT^{-1} \JJ \lambda u_i. \label{eq:qi}
\end{equation}
\end{theorem}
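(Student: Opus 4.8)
The plan is to push the substitution $u_i = p_i/\lambda$ through the system \eqref{eq:pi} and to eliminate the auxiliary time variable by integrating the survival equation over $t\in(0,\infty)$. First I would rewrite the update rule: since $\lambda$ depends on $x$ only, the third line of \eqref{eq:pi} reads $p_{i+1}(x) = \lambda(x)\int_0^\infty \hat p_i(x,t)\,\dd t$, so dividing by $\lambda(x)$ (harmless by the remark after \eqref{eq:u_idef}) identifies $u_{i+1}(x) = \int_0^\infty \hat p_i(x,t)\,\dd t$; in words, $u_{i+1}$ is just the time-integrated survival density over the $i$th interjump interval. Likewise the initial condition $\hat p_i(x,0) = \JJ p_{i-1} = \JJ(\lambda u_{i-1})$ for $i>1$ (and $=p_0$ for $i=1$) already expresses the reinjection data in terms of the previous $u$.

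Next I would integrate the first line of \eqref{eq:pi}, $\partial_t \hat p_i = (\Ell-\lambda)\hat p_i$, over $t\in(0,\infty)$. On the left, the fundamental theorem of calculus gives $\hat p_i(x,\infty)-\hat p_i(x,0)$; invoking the no-explosion/finite-intensity hypothesis already assumed for \textbf{Theorem \ref{thm:1}}, the jump occurs almost surely in finite time, so the survival density decays, $\hat p_i(x,\infty)=0$, leaving $-\hat p_i(x,0)$. On the right, both the differential operator $\Ell$ and multiplication by $\lambda$ act in $x$ only and hence commute with $\int_0^\infty\dd t$, producing $(\Ell-\lambda)\int_0^\infty \hat p_i\,\dd t = (\Ell-\lambda)u_{i+1} = -\TT u_{i+1}$. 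Equating the two sides, cancelling the sign, and inserting $\hat p_i(x,0)=\JJ\lambda u_i$ (reading the recursion with the natural index alignment, so the $i$th survival interval joins the previous jump to the current one, with base case $\TT u_1 = p_0$) yields $\TT u_{i+1} = \JJ\lambda u_i$. Finally, $\TT = \lambda-\Ell$ is invertible on the relevant space: it is the negative generator of the absorbing semigroup $e^{(\Ell-\lambda)t}$, which is exponentially stable because probability strictly leaks out through $\lambda$, so $0$ lies in the resolvent set; hence $u_{i+1} = \TT^{-1}\JJ\lambda u_i$, which is \eqref{eq:qi}. Indeed $\TT^{-1}g = \int_0^\infty e^{(\Ell-\lambda)t}g\,\dd t$ is precisely the ``run the absorbing dynamics from the profile $g$ and integrate over all time'' map of \textbf{Theorem \ref{thm:1}}, so \eqref{eq:qi} is a compact operator-theoretic repackaging of \eqref{eq:pi}.

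I expect the only genuine obstacle to be analytic rather than algebraic: justifying the interchange of $\int_0^\infty\dd t$ with the unbounded operator $\Ell$, and the vanishing $\hat p_i(x,\infty)=0$. These require mild control on $\hat p_i$ — integrability in $t$ of $\hat p_i$ and its relevant spatial derivatives, uniform on compact $x$-sets, together with $\lambda$ being effective enough (e.g. bounded below on the recurrent support) that the survival probability $\int\hat p_i\,\dd x$ decays — all of which hold under the regularity conditions placed on $\Ell,\JJ,\lambda$ for \textbf{Theorem \ref{thm:1}}. A clean way to sidestep the interchange is to work in Laplace variables: the $t$-derivative becomes multiplication by $s$, the transformed survival equation is the purely algebraic relation $(\Ell-\lambda-s)$ applied to the transform equals $-\hat p_i(x,0)$, and evaluating at $s=0$ recovers the identity, valid exactly when $0$ is in the resolvent set — again the invertibility of $\TT$.
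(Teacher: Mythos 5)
Your proposal is correct and follows essentially the same route as the paper: integrate the survival equation in \eqref{eq:pi} over $t\in(0,\infty)$, use $\hat p_i(x,\infty)=0$, commute $\Ell$ and multiplication by $\lambda$ with the time integral, and identify $\int_0^\infty \hat p_i\,\dd t$ with $u_{i+1}=p_{i+1}/\lambda$ before inserting the reinjection initial condition. Your added remarks on the invertibility of $\TT$ via the absorbing semigroup (and the Laplace-transform variant), and your explicit fix of the index alignment in \eqref{eq:pi}, go slightly beyond the paper's brief argument but do not change the approach.
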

\begin{proof}
Integrating both sides of \eqref{eq:pi} with respect to $t$ and noting that $\hat{p}_i(x,\infty)=0$ we are left with
\begin{equation}
	\hat{p}_i(x,0) = \int_{0}^{\infty}\Ell \hat{p}_i \, \dd t - \lambda(x) \int_0^{\infty} \edit{\hat{p}_i \, \dd t}.
\end{equation}
The linear operator $\Ell$ is a differential operator in $x$ and consequently commutes with the time integral. Using the initial condition and $p_i/\lambda = \int_0^{\infty} \hat{p}_i \, \dd t$, we obtain the desired result.
\end{proof}
The map \eqref{eq:qi} provides an explicit description for the sequence
\begin{equation}
	u_1(x) \to u_2(x) \to \cdots \to u_\star(x) \to u_\star(x) \to \cdots, \label{eq:iterates}
\end{equation}
where, we assume $u_\star$ is the fixed point of of the  map \eqref{eq:qi} and exists. Although we seek  to not dwell on this aspect, a natural question to ask is under what conditions does $u_\star$ exist? Equivalently, when does the the process reach stationarity? In \textbf{\ref{sect:spectral}}, we provide a brief commentary about how the relationship \eqref{eq:qi} can be thought of as an iterated linear non-negative integral operator which can be studied as such. Results from theoretical ecology literature are then cited which provide a heuristic analysis of conditions for convergence. 

We proceed assuming that \eqref{eq:qi} has a fixed point, which must be of the following form. 
\begin{theorem}
The stationary distribution of the jump locations $p_\star(x)$ is described by 
\begin{equation}
	0 = \Ell u_\star - \lambda u_\star + \JJ \lambda u_\star, \label{eq:qstar}
\end{equation}
where $u_\star \coloneqq p_\star/\lambda$.
\end{theorem}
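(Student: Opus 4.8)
The plan is to obtain \eqref{eq:qstar} by passing to the limit in the iterative map, using the two equivalent forms already established. From \eqref{eq:qi}, the fixed point $u_\star$ of the sequence \eqref{eq:iterates} satisfies, by definition, $u_\star = \TT^{-1}\JJ\lambda u_\star$. Applying $\TT$ to both sides and recalling $\TT = \lambda - \Ell$ gives $\lambda u_\star - \Ell u_\star = \JJ\lambda u_\star$, which upon rearrangement is exactly \eqref{eq:qstar}. It then remains only to identify $u_\star$ with $p_\star/\lambda$: since $u_i = p_i/\lambda$ for every $i$ by \eqref{eq:u_idef}, and $p_i \to p_\star$ is the defining property of the stationary jump-location density, we have $u_\star = p_\star/\lambda$, so \eqref{eq:qstar} is simultaneously an equation for $u_\star$ and (after multiplying through by $\lambda$ where appropriate) for $p_\star$.

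Alternatively — and this is the route I would actually write out, since it avoids inverting $\TT$ — one can pass to the limit directly in the PDE system \eqref{eq:pi}. Let $\hat p_\star(x,t)$ be the limiting survival density, solving $\partial_t\hat p_\star = \Ell\hat p_\star - \lambda\hat p_\star$ with initial datum $\hat p_\star(x,0) = \JJ p_\star$ and $p_\star(x) = \int_0^\infty \lambda\hat p_\star\,\dd t$. Integrating the evolution equation in $t$ over $(0,\infty)$ and using $\hat p_\star(x,\infty) = 0$ (no surviving mass in the stationary between-jump excursion) yields $-\JJ p_\star = \Ell\!\int_0^\infty\hat p_\star\,\dd t - \lambda\!\int_0^\infty\hat p_\star\,\dd t$; since $\int_0^\infty\hat p_\star\,\dd t = p_\star/\lambda = u_\star$ and $\JJ p_\star = \JJ\lambda u_\star$, this is precisely \eqref{eq:qstar}. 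This mirrors the argument used to establish \eqref{eq:qi}, with the reinjection operator $\JJ$ now appearing because the fixed point feeds its own jump locations back as the initial condition.

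The only genuine subtlety — and the reason the statement is phrased conditionally — is the interchange of the $i\to\infty$ limit with the operators $\TT^{-1}$, $\JJ$, and multiplication by $\lambda$ (equivalently, with the time integral and $\Ell$ in the second approach). This requires convergence of the iterates \eqref{eq:iterates} in a topology strong enough that these operators (unbounded, in the case of $\Ell$) act continuously on the relevant class of densities, together with the no-explosion regularity already assumed in Theorem \ref{thm:1}. I would not belabor this point here: the existence of $u_\star$ and the precise sense of convergence are exactly what is deferred to the spectral discussion in \ref{sect:spectral}, and once convergence in a suitable space is granted the manipulations above are routine. Finally, I would remark that \eqref{eq:qstar} has the same structural form as the stationary version of the full forward equation \eqref{eq:reinject}, but with $u_\star = p_\star/\lambda$ rather than $\tilde p$ as the unknown — the discrepancy being carried entirely by the factor $\lambda$, which is the hook for the subsequent discussion relating $p_\star$ to the stationary density of the reinjected process.
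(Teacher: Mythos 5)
Your argument is correct and is essentially the paper's own: the theorem follows immediately by taking $u_{i+1}=u_i=u_\star$ in the map \eqref{eq:qi} and applying $\TT=\lambda-\Ell$, which is exactly your first paragraph (your second, time-integrated route simply re-derives that map at the fixed point). Your closing remarks on convergence and on the structural parallel with the stationary form of \eqref{eq:reinject} match what the paper defers to \textbf{\ref{sect:spectral}} and to \textbf{Corollary \ref{cor:1}}, respectively.
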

\begin{corollary} \label{cor:1}
The stationary distribution of the jump locations $p_\star$ is the same as the stationary distribution of the full process if and only if $\lambda$ is constant. 
\end{corollary}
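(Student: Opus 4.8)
The plan is to exploit a fact already visible in the statement of the preceding theorem: $u_\star = p_\star/\lambda$ and the stationary density $\tilde p_\star$ of the full (reinjected) process solve \emph{the same} linear equation. Setting $\partial_t\tilde p = 0$ in \eqref{eq:reinject} gives $0 = \Ell\tilde p_\star - \lambda\tilde p_\star + \JJ\lambda\tilde p_\star$, which is exactly \eqref{eq:qstar} with $\tilde p_\star$ substituted for $u_\star$. Hence both $u_\star$ and $\tilde p_\star$ lie in the kernel of the operator $f\mapsto \Ell f - \lambda f + \JJ\lambda f$ acting on integrable functions satisfying the decay and regularity conditions imposed on the problem.

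Next I would invoke uniqueness of the stationary density: under the same hypotheses that guarantee the fixed point $u_\star$ exists (the convergence conditions discussed in \ref{sect:spectral}), this kernel is one-dimensional, so $u_\star = c\,\tilde p_\star$ for some constant $c>0$. Writing $p_\star = \lambda u_\star = c\,\lambda\tilde p_\star$ and integrating in $x$, the normalizations $\int p_\star\,\dd x = \int\tilde p_\star\,\dd x = 1$ force $c^{-1} = \int_{-\infty}^{\infty}\lambda(x)\tilde p_\star(x)\,\dd x =: \bar\lambda$, the mean jump rate in stationarity. This yields the explicit relationship $p_\star(x) = \lambda(x)\tilde p_\star(x)/\bar\lambda$, from which the corollary follows in both directions.

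To close the equivalence: if $\lambda\equiv\lambda_0$ is constant, then $\bar\lambda = \lambda_0$ and the relationship collapses to $p_\star = \tilde p_\star$. Conversely, if $p_\star = \tilde p_\star$, then $\lambda(x)\tilde p_\star(x) = \bar\lambda\,\tilde p_\star(x)$ for all $x$, so $\lambda(x) = \bar\lambda$ wherever $\tilde p_\star(x) > 0$; assuming the stationary density has full support (a mild irreducibility condition, met in the examples considered), this forces $\lambda$ to be constant.

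The main obstacle is the uniqueness step: asserting that $u_\star$ and $\tilde p_\star$ are proportional is precisely the claim that the stationary problem has a one-dimensional solution space, which is the convergence/ergodicity issue deferred to \ref{sect:spectral}, so the cleanest route is to state the corollary as holding under those same standing assumptions. A secondary, more cosmetic point is the support issue in the converse direction: without $\tilde p_\star > 0$ throughout the domain one can only conclude that $\lambda$ is constant on the support of $\tilde p_\star$ — which is in any case all that is dynamically relevant.
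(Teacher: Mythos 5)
Your proposal takes essentially the same route as the paper's proof: observe that $u_\star = p_\star/\lambda$ and the stationary density $\hat{p}_s$ of the reinjected process satisfy the identical stationary equation, conclude they are proportional, and use the two normalizations \eqref{eq:intPSS} and \eqref{eq:intUL} to deduce that $p_\star = \hat{p}_s$ forces $\lambda$ to be constant (and conversely). Your write-up is in fact more explicit than the paper's about the two points it leaves implicit — the uniqueness (one-dimensionality of the kernel) needed to assert proportionality, and the full-support caveat in the converse direction — and your explicit formula $p_\star = \lambda \hat{p}_s / \int \lambda \hat{p}_s \, \dd x$ is a correct sharpening consistent with the scaling discussion following the paper's proof.
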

\begin{proof}
This is an immediate consequence of taking the full process to be in stationarity, so $\dd \hat{p} / \dd t =0$ in \eqref{eq:reinject}, which satisfies
\begin{equation}
0 = \Ell \hat{p}_{s} - \lambda \hat{p}_{s} + \JJ \lambda \hat{p}_{s}.  \label{eq:pss}
\end{equation}
This is exactly the same relationship as \eqref{eq:qstar}. Recalling that $ \lambda u_{\star} = p_{\star}$ and that both $p_{\star}$ and $\hat{p}_{s}$ are  probability densities, the only way that $p_{\star} = \hat{p}_{s}$ is if $u_\star$ is a rescaling of $p_\star$ or, equivalently, $\lambda$ is constant. 
\end{proof}
Elaborating a bit on this relationship: while \eqref{eq:pss} and \eqref{eq:qstar} may appear to produce the same result, the solutions to each require different scalings. Since $\hat{p}_{s}$ is a probability density, it must be that 
\begin{equation}
\int_{-\infty}^{\infty} \hat{p}_{s} \, \dd x =1.  \label{eq:intPSS}
\end{equation}
 However, $u_\star \lambda = p^\star$, where $p_\star$ is a probability density, so this means that
 \begin{equation}
\int_{-\infty}^{\infty} u_\star(x) \lambda(x) \, \dd x  = 1. \label{eq:intUL}
 \end{equation} 

It is worth pointing out that \cite{Daly2007a} derives similar results relating the jump locations and jump times, however, the justification used there  depends  on the trajectories being deterministic between jumps, i.e. without diffusion.

\subsection{Moments}

Define  $\tau_i$ to be the $i$th interjump time, $\tau_i \coloneqq t_{i}-t_{i-1}$. Also, let $\tau_\star$ be the interjump time for the stationary process.
\begin{theorem}
The mean interjump time $\tau_i$ can be recovered from the distribution of the $i$th jump location and is
\begin{equation}
	\langle \tau_i \rangle = \int_{-\infty}^{\infty} u_i \, \dd x =  \int_{-\infty}^\infty \frac{p_i(x)}{\lambda(x)} \, \dd x. \label{eq:tauimean}
 \end{equation}
 Also, the mean stationary interjump time can be computed from the stationary distribution of the jump locations $p_\star$
 \begin{equation}
 	\langle \tau_\star \rangle  =  \int_{-\infty}^{\infty} u_\star \, \dd x = \int_{-\infty}^\infty \frac{p_\star(x)}{\lambda(x)} \, \dd x. \label{eq:taustar}
 \end{equation}
\end{theorem}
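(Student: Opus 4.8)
The plan is to recognize the interjump time $\tau_i$ as the lifetime of the absorbed sub-process governed by the $i$th iterate of \eqref{eq:pi}, and to read its mean off from the survival interpretation of $\hat{p}_i$. Set $S_i(t)\coloneqq\prob[\tau_i>t]$; by the defining property of $\hat{p}_i$ this equals $S_i(t)=\int_{-\infty}^\infty\hat{p}_i(x,t)\,\dd x$, consistent with integrating the first line of \eqref{eq:pi} in $x$: since $\Ell$ is in divergence form and $\hat{p}_i\to 0$ at $\pm\infty$, one gets $S_i(0)=1$ and $S_i'(t)=-\int_{-\infty}^\infty\lambda\hat{p}_i\,\dd x\le 0$, recovering the jump-time density of \textbf{Theorem \ref{thm:1}} as $-S_i'$. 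The mean of the nonnegative variable $\tau_i$ is then its integrated tail,
\begin{equation}
	\langle\tau_i\rangle=\int_0^\infty S_i(t)\,\dd t=\int_0^\infty\!\!\int_{-\infty}^\infty\hat{p}_i(x,t)\,\dd x\,\dd t ,
\end{equation}
the first equality being integration by parts in $\int_0^\infty t\,(-S_i'(t))\,\dd t$, valid once the boundary term $tS_i(t)\to 0$ as $t\to\infty$ is verified.

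The second step is to interchange the order of integration — legitimate by Tonelli, since $\hat{p}_i\ge 0$ — and then substitute $\int_0^\infty\hat{p}_i(x,t)\,\dd t=p_i(x)/\lambda(x)=u_i(x)$, which is precisely the relation used in the proof of \eqref{eq:qi} together with the definition \eqref{eq:u_idef}. This yields $\langle\tau_i\rangle=\int_{-\infty}^\infty u_i\,\dd x=\int_{-\infty}^\infty p_i(x)/\lambda(x)\,\dd x$, which is \eqref{eq:tauimean}.

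For \eqref{eq:taustar} I would run the same computation on the stationary process directly, rather than pass to the limit in $i$. In stationarity the law immediately after a jump is $\JJ p_\star$ — the jump operator applied to the pre-jump law $p_\star$ — and this is again a probability density because $\JJ$ preserves probability flux; hence the survival density $\hat{p}_\star$ solving \eqref{eq:absorb} from $\hat{p}_\star(\cdot,0)=\JJ p_\star$ obeys $\int_{-\infty}^\infty\hat{p}_\star(x,t)\,\dd x=\prob[\tau_\star>t]$ and, by the fixed-point equation \eqref{eq:qstar}, $\int_0^\infty\hat{p}_\star(x,t)\,\dd t=u_\star(x)$. The same tail-plus-Tonelli argument then gives $\langle\tau_\star\rangle=\int_{-\infty}^\infty u_\star\,\dd x=\int_{-\infty}^\infty p_\star(x)/\lambda(x)\,\dd x$. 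One could alternatively let $i\to\infty$ in \eqref{eq:tauimean}, but that would need $p_i\to p_\star$ in a sufficiently strong sense together with uniform integrability of $\{\tau_i\}$, which is exactly the convergence question deferred to \textbf{\ref{sect:spectral}}.

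I expect the main obstacle to be analytic rather than algebraic: one must know the mean is finite, i.e. $u_i\in L^1$ (and $u_\star\in L^1$), which is the no-explosion regularity assumed in \textbf{Theorem \ref{thm:1}}, and one must have a tail bound ensuring $tS_i(t)\to 0$ for the integration by parts; granting those, the interchange of integrals and the substitution of $u_i$ are routine. If the stationary formula were instead derived by passing to the limit, the whole burden would shift onto justifying that limit — which is why I would argue the stationary case directly from the stationary absorbing process.
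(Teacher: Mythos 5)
Your argument is essentially the paper's own proof: integrate the absorbing equation in $x$ to identify $-\frac{\dd}{\dd t}\int\hat{p}_i\,\dd x$ with the jump-time density, compute the mean by integration by parts (your integrated-tail form is the same step), and substitute $\int_0^\infty\hat{p}_i\,\dd t=p_i/\lambda=u_i$. Your extra care about Tonelli, the vanishing boundary term, and the direct treatment of the stationary case via \eqref{eq:qstar} only makes explicit what the paper leaves implicit, so the two proofs coincide in substance.
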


\begin{proof}
We integrate both sides of \eqref{eq:pi} with respect to $x$, again noting that $\hat{p}_i \to 0$ as $x \to \pm \infty$, resulting in
\begin{equation}
\partial_t  \int_{-\infty}^{\infty} \hat{p}_i \, \dd x = -\int_{-\infty}^{\infty} \lambda \hat{p}_i \, \dd x. 
\end{equation}
However, from \textbf{Theorem \ref{thm:1}}, we see that the right-hand side is exactly the distribution of the interjump time $p_{\tau_i}$,
so we have
\begin{equation}
\partial_t  \int_{-\infty}^{\infty} \hat{p}_i \, \dd x  = -p_{\tau_i}(t) \label{eq:ptaui}
\end{equation} 
Taking the mean on both sides with respect to $\tau_i$,
\begin{equation}
\langle \tau_i \rangle =  -\int_{0}^{\infty} \int_{-\infty}^{\infty}t \partial_t \hat{p}_i \, \dd x \, \dd t,
\end{equation}
after integrating by parts and noting that $\int_0^{\infty} \hat{p}_i \, \dd t = p_i/\lambda$, we get the desired result.
\end{proof}
While the first moment (mean) of the interjump time can be computed directly with a quadrature of the presumed known jump location $p_i$, the higher order moments are less straightforward. \textbf{\ref{sect:higher_moments}} describes how, in theory, knowledge of the $p_i$ can be used to extract higher order moments of $\tau_i$. Solving for higher order moments using this approach requires solving a hierarchy of differential equations, which in practice, may not be so feasible. 
However, a more practical numerical approach may be to solve for $p_i$ and then run the absorbing \eqref{eq:absorb} process to extract explicitly $p_{\tau_i}$ using the relationships from \textbf{Theorem  \ref{thm:1}}.

\section{Examples}

\subsection{Neuronal integrate-and-fire}

The behavior of individual neurons can roughly be thought of as an ``integration'' process, which builds up voltage, and then ``fires,'' which describes the release of this voltage into an action potential \cite{Gerstner2002}, the full process of which is known as \textit{integrate-and-fire}. The most classical version of this model involves a deterministic buildup of voltage until a fixed firing voltage threshold is reached. 

Incorporating noise in various ways into this class of models has a rich body of literature \cite{Sacerdote2013,Plesser2000}. Some approaches include a so-called ``leaky'' neuron, one that has diffusive noise in the integration phase \cite{Dumont2016,DiMaio2004,Fourcaud2002}, whereas others regard the threshold itself as stochastic \cite{Braun2015,Daly2006c,Lindner2005b,Dumont2016}. Our work provides a natural framework to consider both sources of noise and their effect.

We propose a simple model of neuronal integrate-and-fire. While this model is greatly simplified from the actual physiology, we take this approach to show that a minimal model, stripped of considerable details, is still able to produce interesting emergent behavior. In previous models with a stochastic threshold, there is some inherent threshold, say, $v_0$, where the firing rate $\lambda(v)$ is taken to be a Gaussian centered around $v_0$ \cite{Daly2006c,Braun2015}. The justification for this that models such as Hodgkin-Huxley \cite{Keener2008}, which account for more fine-grained detail, predict a distinct firing threshold $v_0$. In our model, we find that this sharp threshold $v_0$ can also arise from stochasticity alone, even when no inherent threshold is defined in the model. 

Let $V_t$ denote the voltage at time $t$. Then, in the context of this framework, we take the forms
\begin{equation}
\Ell p = -\partial_v \{ \alpha p\}+ \edit{D \partial_{vv} p}, \qquad \lambda(v) = \gamma e^{\beta v}, \qquad \JJ p = \delta(v) \int_{-\infty}^{\infty}p(v,t) \, \dd v.
\end{equation}
In words:  we take the buildup to be at a constant rate $\alpha$ and leaky with constant diffusivity. While it is typically more common to take evolution of the form $\dot{v} = -a + bv$, this defines an inherent voltage $\tilde{v}=b/a$, which we deliberately omit. The result of making this change is also qualitatively negligible.  For the jump component, we take $\mathbb{J}$ to be the reset to zero case, a full fire. The rate at which the firing occurs is assumed to be monotonically increasing, where again, we emphasize the fundamental difference between this form and previously considered, where $\lambda$ is a Gaussian centered around some voltage $v_0$. In our model, voltage simply builds up, and as it builds up, the system is more likely to fire. There is no prescribed firing threshold as in classical work and its extensions. The justification for this form can be thought of as the underlying mechanism for firing: a discrete birth-death process (also with voltage dependent rates) of ion channels. At some fixed voltage $v$, the birth-death process is more likely to flip entirely ``on'' and cause the neuron to fire. The relationship between models with explicit ion channels and integrate-and-fire has been explored previously \cite{Ditlevsen2013}.  

\begin{figure}[h]
\includegraphics[width=.45\textwidth]{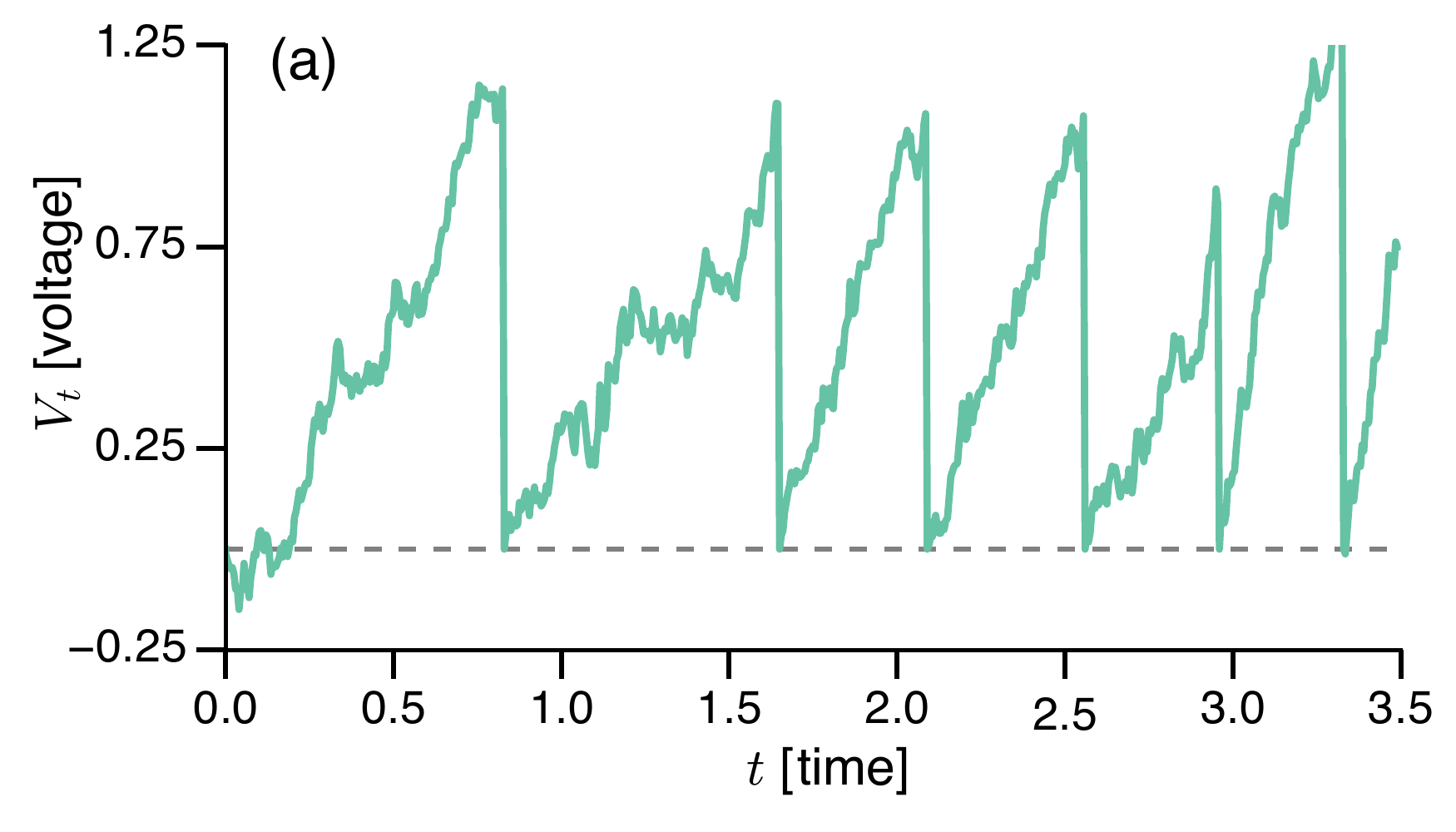}
\includegraphics[width=.6\textwidth]{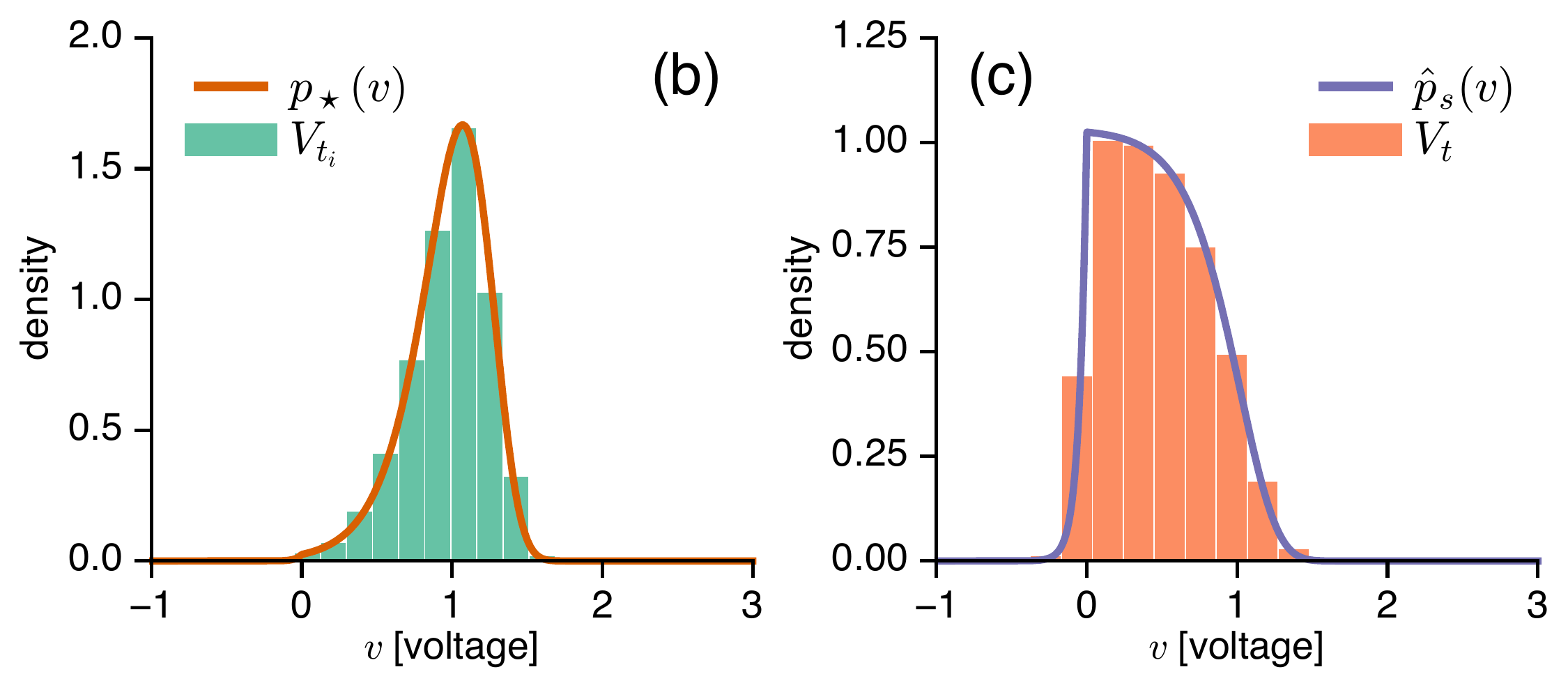}
\protect\caption{\subfig{a} an example trajectory of the proposed integrate-and-fire model. When the neuron fires, it resets back to $V=0$ (dashed, gray). \textbf{\sffamily(b)} the resulting stochastic firing voltages from Monte Carlo simulation (bars) and theoretically predicted (line). \textbf{\sffamily(c)} the stationary density of the full process from Monte Carlo simulation (bars) and predicted (line).   \label{fig:neuronfig}}
\end{figure}

An example simulation of the model can be seen in \textbf{Figure \ref{fig:neuronfig}}(a) with parameter values $\alpha =2, D=.1, \gamma =.5, \beta =2$. In the simulation, voltage builds up in a noisy (leaky) manner, and then the jump process fires, resetting back to $v=0$, typical of an integrate-and-fire model. However, in this case, we emphasize that the firing voltage is inherently random and therefore of interest to study.  Then, in stationarity, the fundamental  relation in this work \eqref{eq:qstar} becomes (noting $x \to v$)
\begin{equation}
	0 = -\partial_v \left\{\alpha u_\star\right\} + D\partial_{vv} u_\star - \gamma e^{\beta x} u_\star + \delta(v) \int_{-\infty}^{\infty}\gamma e^{\beta v} u_\star \, \dd v. \label{eq:u_ex1}
\end{equation}
From \eqref{eq:intUL}, the integral scaling on the $\delta$ function is equal to $1$, 
\begin{equation}
	\delta(v) = -\partial_v \left\{\alpha u_\star\right\} + D\partial_{vv} u_\star - \gamma e^{\beta v} u_\star. \label{eq:ustar_ex1}
\end{equation}
This can be computed in a similar manner to a Green's function, noting that for $v \neq 0$, we have
\begin{equation}
	0 = -\partial_v \left\{\alpha u_\star\right\} + D\partial_{vv} u_\star - \gamma e^{\beta v} u_\star, \label{eq:unot0}
\end{equation}
and integrating \eqref{eq:ustar_ex1} from $(-\eps,\eps)$ and using the fact that $u_\star$ must be continuous, we get the matching condition
\begin{equation}
	D\left\{ u'(0^+)  - u'(0^-)\right\} = -1. \label{eq:jumpcondit}
\end{equation}
Solving \eqref{eq:unot0}, we get that our solution is of the form 
\begin{equation} \label{eq:uform}
	u_\star = \begin{cases} c_1 u_L(v) \coloneqq c_1 e^{\frac{\alpha v}{2D}} I_{\alpha/(\beta D)} \left(\frac{2\sqrt{D\gamma e^{\beta v}}}{\beta D}\right) & v < 0 \
	\\
	c_2u_R(v) \coloneqq c_2 e^{\frac{\alpha v}{2D}} K_{- \alpha /(\beta D)} \left(\frac{2\sqrt{D\gamma e^{\beta v}}}{\beta D}\right)  & v >0, 	
	\end{cases}
\end{equation}
where $I, K$ are modified Bessel functions \cite{Abramowitz1966} such that $I \to 0$ as $v \to -\infty$ and $K \to 0$ as $v \to \infty$ (and are linearly independent). The jump condition \eqref{eq:jumpcondit} then becomes
\begin{equation}
	D\left\{c_2u_R'(0) - c_1u_L'(0)\right\} = -1. \label{eq:condit1}
\end{equation}
However, we also need to impose continuity of $u_\star$, so we also have the requirement
\begin{equation}
	c_1u_L(0) = c_2u_R(0). \label{eq:condit2}
\end{equation}
The conditions \eqref{eq:condit1},\eqref{eq:condit2} provide us two equations for two unknowns, which yield
\begin{equation}
	c_1 = \frac{2}{D\beta } 2 K_{\alpha/(D\beta)}\left(\frac{2\gamma}{\beta\sqrt{D\gamma}}\right), \qquad c_2 = \frac{2}{\beta D} 2 I_{\alpha/(\beta D)}\left(\frac{2\gamma}{\beta\sqrt{D\gamma}}\right).
\end{equation}
Thus far, our solution is defined only up to a constant, but we know that $u_\star$ must satisfy the scaling \eqref{eq:intUL}, which our choice of $c_1,c_2$ serendipitously already satisfy. By computing $u_\star$ explicitly, we can use \eqref{eq:u_idef} to immediately obtain the distribution of jump locations $p_\star$. Finally, from  \textbf{Corollary \ref{cor:1}}, we also have the stationary density of the full process, $\hat{p}_{s}$, which is a rescaling of $q_\star$ such that \eqref{eq:intPSS} is satisfied. 

The stationary density $\hat{p}_s$ of the full process and the stationary jump distribution $p_\star$ are shown in \textbf{Figure \ref{fig:neuronfig}}(b,c). From this, we can see that $\hat{p}_s$, the stationary density of the full process provides no interesting information. It is relatively uniform through some range of voltages. If a PDE approach were taken to provide information about the full process, this is all that would be available. However, the jump locations $p_\star$, are noteworthy. Despite prescribing no explicit voltage threshold for firing, the state-dependent nature of the firing rate yields an effective threshold ($v_0 \approx 1$ for these parameter values) and persists through a wide range range of parameters. Hence, this is another, novel justification for models that take a deterministic firing threshold, as our minimal integrate-and-fire model produces this feature as a product of stochasticity, which is made apparent in the lens of the proposed framework.

\subsection{Transport by a molecular motor}

Consider a single molecular motor (e.g. kinesin \cite{Howard2001}) attached to a cargo as in \textbf{Figure \ref{fig:motorfig}}(a). Molecular motors produce transport by taking discrete steps along a track, exerting a force on a cargo. A notable feature of these steps is that the rate at which they are taken is well-established to be force-dependent \cite{Carter2005}. That is, motors typically ``slow down'' as force is exerted on them. As this discrete stepping process occurs, another source of noise is the diffusion of the cargo, which is also instantaneously changing the force applied to the motor, and consequently, the rate at which it steps. In our proposed simple motor of transport, we provide a  preliminary investigation into this question.

We assume that the linker between the molecular motor and the cargo can be treated as a Hookean spring, where $x$ denotes the distance stretched (from rest), as depicted in \textbf{Figure \ref{fig:motorfig}}(a). Then, the diffusion component of this process is an Ornstein-Uhlenbeck process. We take the motor to take fixed step sizes, $\Delta$ and the rate at which motors step to be Gaussian $\lambda(x) = \sqrt{\frac{\alpha^2 \beta}{2\pi}}\exp\{ -\beta x^2 /2 \}$. Although there is evidence that the direction of the force is significant for motor stepping, this often deals with the proportion of forward to back steps of the motor. Because we  assume forward steps here, we assume, for simplicity, any force decreases the motor stepping rate. 

\begin{figure}[h]
\includegraphics[width=.225\textwidth]{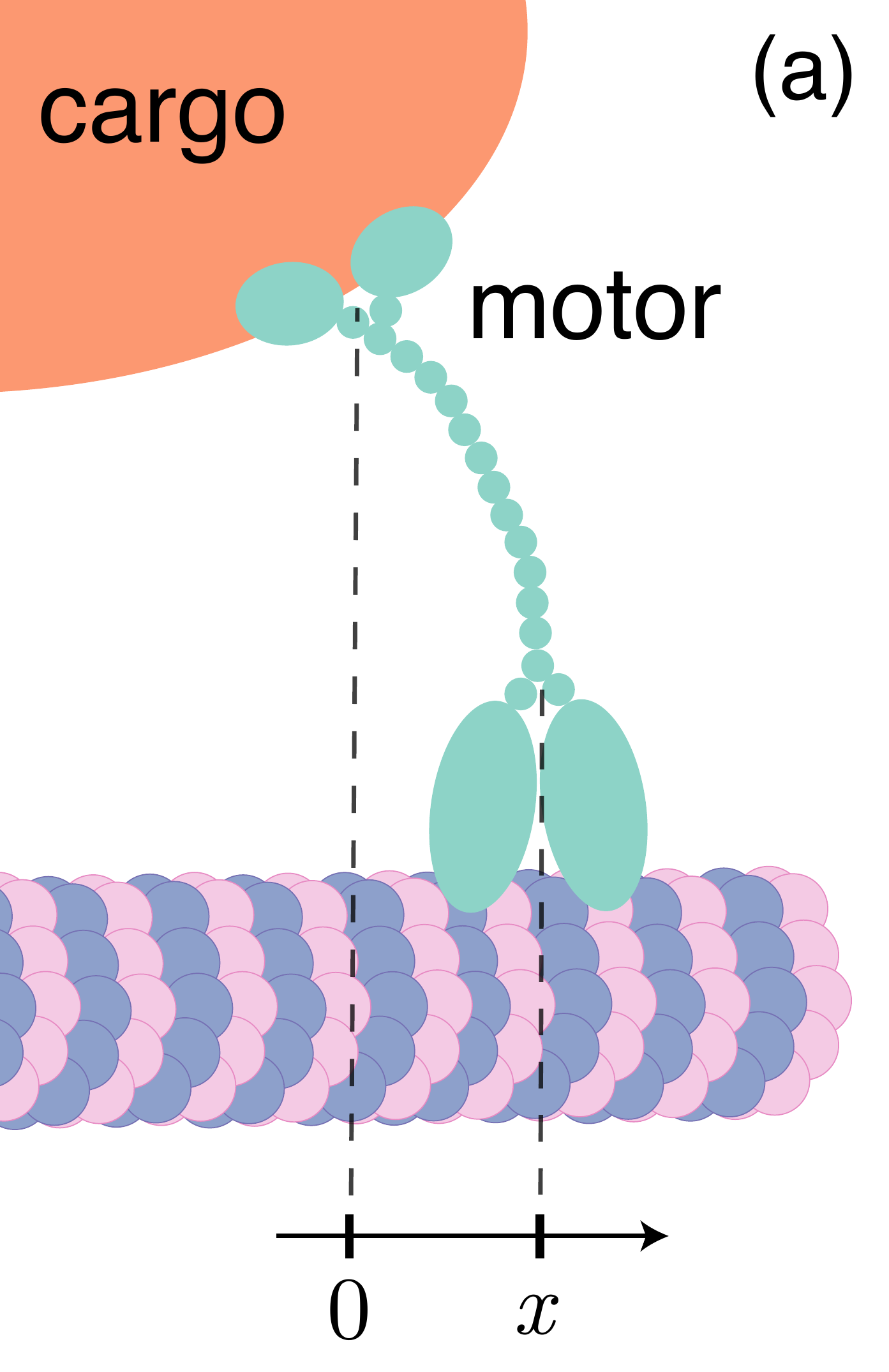}
\includegraphics[width=.775\textwidth]{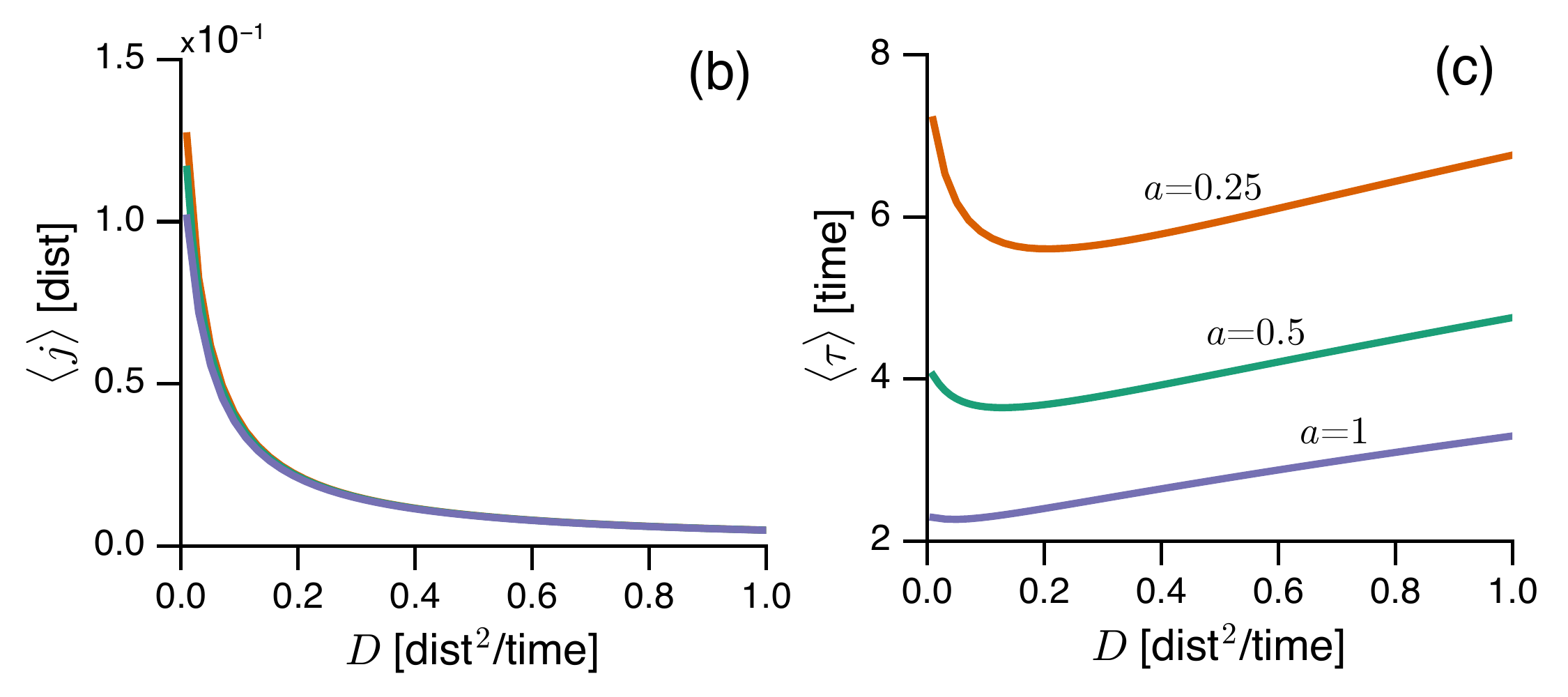}
\protect\caption{\textbf{\sffamily(a)} a diagram illustrating the setup for the molecular motor model, where $x$ denotes the distance the motor linker is stretched from rest. \textbf{\sffamily(b)} the mean stationary jump location distribution for different values of $a$, linker strength as a function of the cargo diffusion coefficient $D$. \textbf{\sffamily(c)} the mean stationary interjump time (stepping rate) for the motor. \label{fig:motorfig}}
\end{figure}

The resulting stationary relationship \eqref{eq:qstar} becomes
\begin{equation}
0 = \partial_x \left\{ axu_\star \right\} + D \partial_{xx} u_\star - \lambda(x) u_\star + \lambda(x-\Delta)u_{\star}(x-\Delta). \label{eq:ustar_ex3}
\end{equation}
There does not appear to be a fruitful approach to studying \eqref{eq:ustar_ex3} analytically. However, we solve can it numerically using a Crank-Nicolson upwind scheme. From this, $u_\star$ and consequently $p_\star$ is obtained. The first moment of the interjump time $\langle \tau_\star \rangle$ is immediate after a quadrature \eqref{eq:taustar}. 

We plot the mean value of $x$ when a jump occurs, denoted $\langle j \rangle$ and the mean time between jumps, $\langle \tau \rangle$ as a function of the cargo diffusion coefficient $D$, for varying values of the linker strength, $a$. These results can be seen in \textbf{Figure \ref{fig:motorfig}}(b,c) for parameter values $D=.1,\alpha=1,\beta=5$. The strength of diffusion appears to produce a curious effect on the other noise source. While the mean jump location, $\langle j \rangle$, remains effectively unchanged for varying values of linker strength, the mean interjump time exhibits very different behavior for different values of $a$. Specifically, we see that if the linker is weak, the interjump time can have a non-monotonic dependence on the diffusion coefficient. That is, the interaction between these two sources of noise (which can be studied as a result of this work) is non-trivial. 

While this model is far simpler than the complex physiological it describes, the overall result (non-monotonic dependence on cargo diffusion) may be a feature that persists in a more detailed motor model (e.g. multi-motor), as other non-monotonic dependences on cargo noise have been seen \cite{Miles2016a}. Although the quantity $\langle j \rangle$ (the average distance a motor is stretched when stepping) is perhaps not the most useful in extracting information about more complex transport, it seems feasible that it it could be utilized in a reward-renewal framework akin to others utilized to study motor transport such as \cite{Krishnan2011,Hughes2011,Hughes2012}. \edit{Motors also often have an asymmetric (state-dependent) unbinding rate, thought to be significant \cite{Takshak2016} and could perhaps provide a future application for this framework.}

\subsection{Independence of $\lambda(x)$}

Lastly, we provide a short example (with no physical motivation in mind) that shows that our framework can  be used with other tools (here, Fourier analysis) to glean interesting features of certain families of models. For this example, we consider a constant drift and diffusion with random jump size distribution $\mu(\Delta)$. Then, in stationarity, \eqref{eq:qstar} becomes
\begin{equation}
	0 = -\partial_x\left\{a u_\star\right\} + D \partial_{xx}u_\star - \lambda(x) u_\star + \int_{-\infty}^{\infty}\mu(\Delta) \lambda(x-\Delta) u_\star(x-\Delta) \, \dd \Delta. \label{eq:uJUMP}
\end{equation}
One quantity of interest is the first moment of jump locations, so 
\begin{equation}
	\langle j_\star \rangle = \int_{-\infty}^{\infty} x p_\star \, \dd x = \int_{-\infty}^{\infty} x u_\star \lambda \, \dd x.
\end{equation}
Although $\lambda(x)$ is arbitrary, surprisingly the Fourier transform can be used to gain some insight. Define the Fourier transform of $u_\star$ to be
\begin{equation}
	U(k) \coloneqq \mathcal{F}[u_\star] = \int_{-\infty}^{\infty} e^{-ikx} u_\star(x) \, \dd x.
\end{equation}
Also define the transformed quantities
\begin{equation}
	\Lambda(k) \coloneqq \mathcal{F}\left[\lambda u_\star \right], \qquad M(k) \coloneqq \mathcal{F}[\mu]
\end{equation}
from which, we note that 
\begin{equation}
 	\Lambda'(0) = -i\langle j \rangle, \qquad M'(0) = - i\langle \mu  \rangle.
 \end{equation} 
 Taking the transform of \eqref{eq:uJUMP},
\begin{equation}
	0 = -aikU - Dk^2U - \Lambda(k) + M(k)\Lambda(k), \label{eq:Uk}
\end{equation}
which, evaluated at $k=0$ yields
\begin{equation}
	\left[1-M(0)\right] \Lambda(0) = 0. \label{eq:1stfourier}
\end{equation}
However, we know  $\mu(\Delta)$ and $u_\star\lambda = p_\star$ are both probability densities and consequently $M(0)=1$ and $\Lambda(0)=1$. Thus, \eqref{eq:1stfourier} is trivially satisfied. This is not useful on its own, but we can couple the higher order moments by taking a $k$ derivative of \eqref{eq:Uk} to yield
\begin{equation}
	0 = -aikU' - aiU - 2DkU - 2k^2U' - \Lambda' + M'\Lambda + M\Lambda', \label{eq:1deriv}
\end{equation}
which, at $k=0$ and using $M(0)=1, \Lambda(0)=1$ yields
\begin{equation}
	0 = -aiU  -\Lambda'(0) + M'(0) + \Lambda'(0),
\end{equation}
which says that necessarily
\begin{equation}
	M'(0) = - i \langle \mu \rangle = aiU(0).
\end{equation}
However, recall from \eqref{eq:tauimean} that $U(0) = \langle \tau_\star \rangle$, the mean interjump time, so 
\begin{equation}
	\langle \tau_\star \rangle = - \frac{\langle \mu \rangle}{a}. \label{eq:taustar_ex2}
\end{equation}
This result is interesting for two reasons. For one, we have the mean interjump time is completely independent of the choice of $\lambda(x)$, aside from the requirement that the system reaches stationarity. Secondly, the sign of \eqref{eq:taustar_ex2}  narrows down the requirements for the process to reach stationarity. Since $\langle \tau_\star \rangle $ is an interjump time, it must be non-negative, meaning this quantity only exists if $a, \langle \mu \rangle$ differ in sign. This is intuitive as the drift and jump process must oppose each other to have a chance of reaching stationarity.  \edit{We acknowledge that the linearity of this example (which allows us to perform the explicit Fourier analysis) also makes the result possible to derive through other methods (e.g. taking expectations of the SDE). However, we hope this analysis conveys that our framework provides a different lens (in junction with elementary tools) to study and possibly reveal behavior of a system.  }

\section{Discussion \& Conclusion}

In this work, we present a general framework for studying jump-diffusion systems with state-dependent jump rates, a class of models that were previously difficult or impossible to study otherwise due to the interaction between the two sources of noise. The formulation is flexible enough to accommodate a variety of behaviors, providing relevance to a wide range of applications. The particular objects of study in this work are the distributions of the jump locations: the values of $X_t$ evaluated at the jump times $t_i$.

We reformulate the full process into a survival description between jumps. Using this reformulation, the explicit distribution of jump locations can be extracted in a manner that depends on the previous jump location. This relation can be articulated as an iterated map, producing a sequence of jump locations. With explicit knowledge of the distribution of jump locations, statistics of the interjump times can be computed in more generality than previous work \cite{Daly2007a}. Taking this map to its limit, or equivalently, assuming the process reaches stationarity, the stationary distribution of the full process and the  jump locations are shown to be closely related. An immediate consequence of this relation is that these two quantities differ if and only if the jump intensity is state dependent. That is, for the class of models studied in this work, the jump locations are distinct from the stationary density. Consequently, if the jump locations are of interest, the results of this work illustrate their relation to observations of the full process. 

Although not the focus of this work, we provide a brief discussion of possible conditions on convergence to stationarity in the novel lens of this framework. This discussion utilizes tools from theoretical ecology motivated by the observation that the iterated map formulation is effectively a non-negative integral linear operator. This is in contrast to previous, more probabilistic approaches to studying finite time blowup of these models or similar \cite{Bally2017, Xi2009,DeVille2014,Cloez2015}. 

Finally, we provide three example applications of the framework. The first application, to a simple model of  stochastic neuronal integrate-and-fire, is used to show a scenario where the jump locations themselves may be of interest. Our minimal model contains no inherent threshold voltage, but, as a product of stochasticity of the firing threshold, forms a robust, sharp peak of firing locations. This provides additional and novel support for the use of a deterministic threshold as in classical integrate-and-fire and its extensions. The second model, one of intracellular transport by a molecular motor is used to demonstrate how the two sources of noise (diffusion and jumps) interact. We ultimately find that the mean stepping rate of the motor can have a non-monotonic dependence on the strength of cargo diffusion, a consequence of the interaction between the two noise sources. Finally, a third example calculation shows that certain classes of models (in this case, constant drift, random jump size) can have curious behaviors. Specifically, we find that the mean interjump time for be \textit{independent} of the choice of $\lambda(x)$, so long as the process does indeed reach stationarity. 

This work proposes a preliminary framework for studying a wide class of problems. Although not explicitly studied here, we suspect that with more exotic diffusions (e.g. colored noise, which fits within the scope of these results), the interactions between the feedback between the noise sources can produce even more interesting behavior and will be explored in the future. Additionally, a more precise exploration is necessary of how an iterated map framework such as this can be used to provide conditions of stationarity.

\section{Acknowledgments}
This research was partially supported by NSF grant DMS 1515130 and DMS-RTG 1148230.
\appendix
\section{Higher order interjump time moments} \label{sect:higher_moments}
In theory, higher order moments of the interjump times can be computed with knowledge of the sequence of jump distributions $p_i$ as was discussed with the first moment in \eqref{eq:tauimean}. 

To illustrate this, consider \eqref{eq:ptaui} and take the second moment with respect to $\tau_i$ on both sides, resulting in
\begin{equation}
\langle \tau_i^2 \rangle = -\int_{-\infty}^{\infty} \int_0^{\infty} t^2 \partial_t \hat{p}_i \, \dd t \, \dd x.
\end{equation}
Again, noting that $u_i = p_i/\lambda = \int_0^{\infty} \hat{p}_i \, \dd t$, after integrating by parts, we get
\begin{equation}
\langle \tau_i^2 \rangle = -\int_{-\infty}^{\infty} v_i \, \dd x,
\end{equation}
which looks similar to \eqref{eq:tauimean}, however, $v_i$ is defined by the relation 
\begin{equation}
[\lambda(x) - \Ell]v_i =  u_i. \label{eq:v_i}
\end{equation}
Thus, computation of the second moment requires solving the differential relationship \eqref{eq:v_i}, rather than just a quadrature as in the first moment. Continuing to higher order moments using same approach yields a further hierarchy coupled in a differential manner. Interestingly, the differential operator (left hand side) of \eqref{eq:v_i} is exactly $\TT$, the same as \eqref{eq:qi}, meaning the Green's function (effectively $\TT^{-1}$, discussed more in \textbf{\ref{sect:spectral}}) could  be used to compute these quantities. 

\section{Spectral properties of iterated map} \label{sect:spectral}
Note that \eqref{eq:qi} involves the inverse of a differential linear operator $\TT$, which is exactly determined by its corresponding Green's function. Let $G(x,\xi)$ be the corresponding Green's function to $\TT$, meaning that
\begin{equation}
	\TT_x G(x,\xi) = \delta(x-\xi).
\end{equation}
Then, \eqref{eq:qi} can be rewritten as
\begin{equation}
	u_{i+1} = \int_{-\infty}^{\infty} G(x,\xi) \JJ \lambda u_i(\xi) \, \dd \xi.	
\end{equation}
After a change of variables, define $\tilde{G}$ by
\begin{equation}
u_{i+1} = \int_{-\infty}^{\infty} \tilde{G}(x,\zeta) u_i(\zeta) \, \dd \zeta \coloneqq \int_{-\infty}^{\infty} G(x,\xi) \JJ \lambda u_i(\xi) \, \dd \xi. \label{eq:tildeG}
\end{equation}

For example, for fixed jump sizes: $\JJ p = p(x-\Delta)$, then 
\begin{equation}
	\tilde{G}(x,\zeta) = G(x, \zeta+\Delta)\lambda(\zeta+\Delta).
\end{equation}
Abbreviate this linear non-negative integral operator
\begin{equation}
\mathbb{A} q \coloneqq \int_{-\infty}^{\infty} \tilde{G}(x,\zeta) q(\zeta) \, \dd \zeta.
\end{equation}

Now, iterations of our map correspond to iterating the integral operator $\mathbb{A}$ with kernel $\tilde{G}$. This formulation is exactly that of the so-called \textit{integral projection models} (IPM) in theoretical ecology\cite{Ellner2006}. Stability results from the IPM literature can be used to understand the convergence of our iterative procedure.

The main result comes from \cite{Easterling1998} and is effectively a statement of the Krein-Rutman theorem \cite{Krein1950,Krasnoselskii1964}, the infinite dimensional analog of the Perron-Frobenius theorem.  \cite{Keener2000,Bressloff2014}

Although $L^1$ seems like the natural function space to study the spectral properties of these operators, since they must preserve probability, $L^2$ turns out to be far more accessible due to issues establishing compactness in $L^1$. In Appendix C of \cite{Ellner2006}, the authors provide a more thorough commentary on these complications. we cite the main theorem which establishes the existence of a dominant eigenvalue for $\Aop$. 
\begin{theoremA}[Easterling 1998, \cite{Easterling1998}]
Suppose that $\tilde{G} \in L^2$ and is non-negative. If  there exists an $\alpha >0 , \beta >0, u_0$ such that
\begin{equation}
\alpha(x) u_0 (\xi) \leq \tilde{G}(x,\xi) \leq \beta(x) u_0 (\xi)
\end{equation} 
for all $x,\xi$, then the integral operator $\Aop$ has a dominant eigenvalue with associated eigenfunction. 
\end{theoremA}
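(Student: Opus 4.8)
The plan is to read this as a Jentzsch--Krein--Rutman type statement and to verify, in order, the three structural ingredients such a theorem needs: compactness of $\Aop$ on the chosen space, positivity of $\Aop$ with respect to the cone $K=\{q\in L^2: q\ge 0 \text{ a.e.}\}$, and a strong (``focusing'') positivity property coming from the two-sided bound that upgrades the bare Perron--Frobenius conclusion to a \emph{simple, dominant} eigenvalue. The hypotheses in the statement are exactly tuned so that each of these can be checked with an elementary estimate.

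First I would establish compactness. Since $\tilde G$ lies in $L^2$ of the product space, $\Aop$ is a Hilbert--Schmidt operator on $L^2$, hence compact; this is precisely why (as the main text notes) $L^2$ rather than $L^1$ is the workable setting. Next, non-negativity of $\tilde G$ gives at once that $\Aop K\subseteq K$, so $\Aop$ is a positive operator; $K$ is a normal, generating cone, and the Krein--Rutman theorem for a compact positive operator on an ordered Banach space with generating cone yields that the spectral radius $r(\Aop)$ is an eigenvalue with an eigenfunction $\phi\in K$, $\phi\neq 0$, satisfying $\Aop\phi=r(\Aop)\phi$. To see that this eigenvalue is genuinely positive (and not the trivial value $0$), test the lower bound: $\langle u_0,\Aop u_0\rangle \ge \langle u_0,\alpha\rangle\,\langle u_0,u_0\rangle>0$ (here $\alpha>0$ and $u_0\not\equiv 0$, else the bounds are vacuous), so $r(\Aop)>0$.

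Then I would use the sandwich $\alpha(x)u_0(\xi)\le \tilde G(x,\xi)\le \beta(x)u_0(\xi)$ to obtain dominance. Integrating against any nonzero $q\in K$ shows $(\Aop q)(x)$ lies between $\langle u_0,q\rangle\,\alpha(x)$ and $\langle u_0,q\rangle\,\beta(x)$, with $\langle u_0,q\rangle>0$; consequently $\Aop$ is $u_0$-positive in Krasnosel'skii's sense (every nonzero positive $q$ is mapped into an order interval generated by a fixed comparison function). For a compact $u_0$-positive operator this forces $r(\Aop)$ to be an algebraically simple eigenvalue, the unique eigenvalue possessing a positive eigenfunction, and strictly larger in modulus than the rest of the spectrum --- i.e. dominant. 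That is the asserted conclusion.

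The main obstacle is not any single estimate but a structural one: the positive cone $K\subset L^2$ has \emph{empty interior}, so the most familiar form of Krein--Rutman (which assumes the cone has an interior point) does not apply directly, and one must instead run the argument through the version for compact positive operators with a generating (total) cone, combined with the $u_0$-positivity above to recover simplicity and strict dominance. This is exactly the delicate point treated in Easterling (1998) and commented on in Appendix C of \cite{Ellner2006}. A lesser technical point requiring care is measurability and square-integrability of the marginals of $\tilde G$ and of the comparison functions $\alpha,\beta,u_0$, so that all the pairings $\langle u_0,q\rangle$ make sense --- but this again follows from the $L^2$ hypothesis on $\tilde G$ together with the two-sided bound.
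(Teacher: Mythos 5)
The paper does not actually prove this statement: it is quoted verbatim as an imported result, with the proof deferred entirely to Easterling (1998) and the discussion in Appendix C of \cite{Ellner2006}. Your reconstruction follows the same route that underlies that cited literature --- Hilbert--Schmidt compactness of $\Aop$ from $\tilde{G}\in L^2$, invariance of the positive cone, and a Krein--Rutman/Krasnosel'ski\u{\i}-type positivity argument in place of the interior-point version of Krein--Rutman (since the $L^2$ cone has empty interior) --- so in approach you are aligned with the source the paper leans on, and the outline is sound. One step deserving more care: from the stated sandwich $\alpha(x)u_0(\xi)\le\tilde{G}(x,\xi)\le\beta(x)u_0(\xi)$ you get $\langle u_0,q\rangle\,\alpha(x)\le(\Aop q)(x)\le\langle u_0,q\rangle\,\beta(x)$, which pinches $\Aop q$ between multiples of \emph{two} fixed functions $\alpha$ and $\beta$, not between scalar multiples of a single comparison element as Krasnosel'ski\u{\i}'s $u_0$-positivity literally requires; to close this you need either a comparability assumption ($\beta\le C\alpha$ for some constant $C$), or to invoke the version of the theory for operators that are $u_0$-bounded below together with a separate upper bound (which is how the cited treatment proceeds), and you should also make explicit that $u_0$ is strictly positive so that $\langle u_0,q\rangle>0$ for nonzero $q\ge 0$. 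With that point tightened, your argument is a faithful proof of the quoted theorem; the paper itself offers nothing to compare it against.
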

This establishes conditions for the existence of a dominant eigenvalue and eigenvector, which establish the long-term behavior. 
\begin{theoremA}[Easterling 1998, \cite{Easterling1998}]
Assuming that $\Aop$ satisfies the previous condition, then the stationary distribution $u_\star$ is given by the eigenfunction $\phi_1$ associated with the dominant eigenvalue $\lambda_1$, 
\begin{equation}
\lim_{i\to\infty} \frac{u_i}{\lambda_1^i} = \kappa \phi_1,
\end{equation}
where $\kappa$ is a scaling parameter.
\end{theoremA}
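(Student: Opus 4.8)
The plan is to read \eqref{eq:tildeG} as the power iteration $u_{i+1} = \Aop u_i$, so that $u_i = \Aop^{\,i-1} u_1$, and then to exploit the Krein--Rutman structure furnished by the preceding theorem: once $\Aop$ has a simple, strictly dominant eigenvalue $\lambda_1$ with a spectral gap, the normalized iterates converge to the dominant eigenfunction. Concretely, first I would note that the sandwich bound $\alpha(x) u_0(\xi) \le \tilde{G}(x,\xi) \le \beta(x) u_0(\xi)$ forces $\tilde{G} \in L^2(\mathbb{R}\times\mathbb{R})$, so $\Aop$ is Hilbert--Schmidt and hence compact on $L^2$; together with non-negativity (and the strict lower bound, which supplies the irreducibility/strong-positivity needed), the previous theorem gives a dominant eigenvalue $\lambda_1>0$ that is simple and isolated, with strictly positive eigenfunction $\phi_1$, the rest of the spectrum lying in a disk of radius $r<\lambda_1$. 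The adjoint $\Aop^{*}$ likewise has $\lambda_1$ as a simple eigenvalue with strictly positive eigenfunction $\psi_1$.

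Next I would introduce the Riesz spectral projection $\Pi$ onto $\mathrm{span}(\phi_1)$ and decompose the initial density as $u_1 = \Pi u_1 + (I-\Pi)u_1 =: \kappa\,\phi_1 + w$, where $\kappa = \langle \psi_1, u_1\rangle / \langle \psi_1, \phi_1\rangle$ and $w$ lies in the complementary $\Aop$-invariant subspace. Since $u_1$ is a non-negative density (not identically zero) and $\psi_1>0$, we get $\kappa>0$, so the limit is genuinely nontrivial. Because $\Aop$ commutes with $\Pi$ and $\Aop\phi_1 = \lambda_1\phi_1$, one has $u_i = \Aop^{\,i-1}u_1 = \kappa\,\lambda_1^{\,i-1}\phi_1 + \Aop^{\,i-1}w$, and the spectral radius of $\Aop$ restricted to the complementary subspace is $r<\lambda_1$. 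Hence for any $\eps$ with $r+\eps<\lambda_1$ there is a constant $C$ with $\|\Aop^{\,i-1}w\| \le C(r+\eps)^{\,i-1}\|w\|$. Dividing by $\lambda_1^{\,i}$ and letting $i\to\infty$ kills the remainder term and yields $u_i/\lambda_1^{\,i} \to (\kappa/\lambda_1)\,\phi_1$ in $L^2$, which is the asserted limit after absorbing the harmless factor into the scaling parameter.

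The main obstacle is the spectral-gap and simplicity statement — that $\lambda_1$ strictly dominates every other spectral value in modulus and has a one-dimensional eigenspace. This is exactly the content of the Krein--Rutman theorem, and it hinges on verifying its hypotheses for $\Aop$: compactness (the Hilbert--Schmidt argument above) together with a strong-positivity/irreducibility property extracted from the strict lower bound $\alpha(x)u_0(\xi)\le\tilde{G}(x,\xi)$. A secondary subtlety is that probability densities live naturally in $L^1$, where compactness fails; one either argues in $L^2$ as above and transfers the conclusion, or appeals to the $L^1$ Krein--Rutman variants discussed in \cite{Ellner2006}. Finally, one must confirm $\kappa\neq 0$ for the initial data of interest, which is precisely where non-negativity of $u_1$ and strict positivity of the adjoint eigenfunction $\psi_1$ enter.
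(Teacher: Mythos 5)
The paper itself offers no proof of this statement: it is quoted directly from Easterling (1998) and the integral projection model literature, and is used only as a cited heuristic in \ref{sect:spectral}. So there is nothing in the text to compare against; what you have written is the standard spectral/power-iteration argument that underlies the cited result, and it is essentially sound: Hilbert--Schmidt compactness of $\Aop$ from $\tilde{G}\in L^2$, a strictly dominant simple eigenvalue $\lambda_1$ with positive eigenfunction $\phi_1$, the Riesz projection decomposition $u_1=\kappa\phi_1+w$, geometric decay of $\Aop^{\,i-1}w$ from the spectral radius bound on the complementary invariant subspace, and $\kappa>0$ from $\langle\psi_1,u_1\rangle>0$; the leftover factor $\kappa/\lambda_1$ is indeed absorbed into the scaling constant.

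Two points deserve to be made explicit if you want this to be airtight. First, the classical ``strong'' Krein--Rutman theorem, which delivers simplicity and an isolated dominant eigenvalue, requires a cone with nonempty interior, and the positive cone of $L^2$ has empty interior; the correct route is Krasnoselskii's $u_0$-positivity (or $u_0$-boundedness), which is exactly what the sandwich bound $\alpha\,u_0\le\tilde{G}\le\beta\,u_0$ provides, since it makes $\Aop q$ comparable to a fixed positive element for every nonzero $q\ge0$. Note also that the ``previous condition'' theorem as stated in the paper only asserts the existence of a dominant eigenvalue and eigenfunction, so the simplicity, the spectral gap $r<\lambda_1$, and the positivity of the adjoint eigenfunction $\psi_1$ (needed for $\kappa\neq0$) are things you must extract from the $u_0$-positivity of $\Aop$ and $\Aop^{*}$ rather than read off from that statement; you flag this obstacle yourself, which is the right instinct. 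With those details supplied, your argument is a correct proof of the quoted convergence result.
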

The previous theorem suggests that the sequence $u_i$ converges if and only if $|\lambda_1| = 1$. Thus, this summary provides a rough heuristic, but possibly novel angle for determining whether the sequence $u_i$ (and the full process) approach stationarity, although we reemphasize that this is not our focus.

\bibliography{jumpdiffbib}


\end{document}